\newif\ifFULL
\newif\ifLATIN
\newif\ifJOURNAL
\newif\ifCONF
\newif\ifarXiv
\newif\ifWP
\newif\ifarXivWP
\newif\ifnotFULL   
\newif\ifnotLATIN  
  \newcommand*{\CTII}{Vovk:2008ECP}
  \newcommand*{\CTIV}{Vovk:2012-FS}
  \newcommand*{\CTXI}{Vovk:arXiv1604}
  \newcommand*{\CTII}{Vovk:2008ECP}
  \newcommand*{\CTIV}{Vovk:2012FS-short}
  \newcommand*{\CTXI}{Vovk:arXiv1604}
  \newcommand*{\CTII}{Vovk:arXiv0712.1483}
  \newcommand*{\CTIV}{Vovk:arXiv0904}
  \newcommand*{\CTXI}{Vovk:arXiv1604}
  \newcommand*{\CTII}{GTP25}
  \newcommand*{\CTIV}{GTP28}
  \newcommand*{\CTXI}{GTP43}
  \renewcommand*{\CTII}{GTP25arXiv}
  \renewcommand*{\CTIV}{GTP28arXiv}
  \renewcommand*{\CTXI}{GTP43arXiv}
\newcommand{\this}{Vovk:2011-LMJ-avoid}  
\newcommand{\zzrelax}[1]{\relax}
\newif\iftwodates
\renewcommand\maketitle{\begin{titlepage}%
  \let\footnotesize\small
  \let\footnoterule\relax
  \let \footnote \thanks
  \null\vfil
  \vskip 30\p@
  \begin{center}%
    {\LARGE \bf \@title \par}%
    \vskip 3em%
    {\large
     \lineskip .75em%
     \begin{tabular}[t]{c}%
       \@author
     \end{tabular}\par}%
     \vskip 1.5em%
  \end{center}\par
  \vfill
  \begin{center}
    \raisebox{1.5cm}{\includegraphics[width=0.58\textwidth]%
      {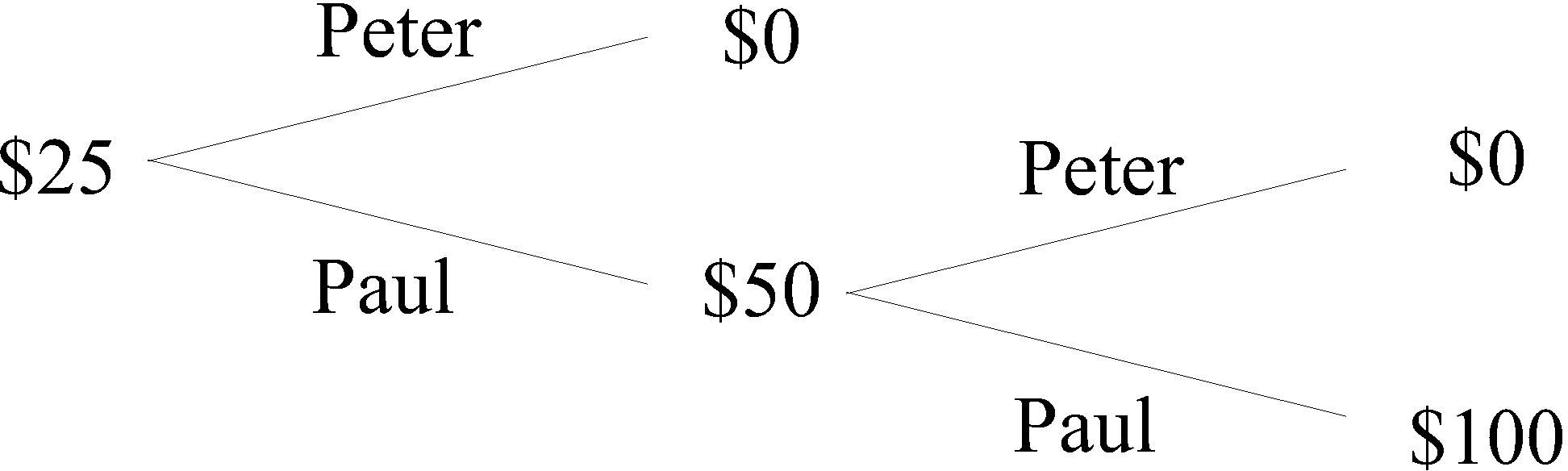}}%
    \hskip 3em%
    \includegraphics[width=0.29\textwidth]%
      {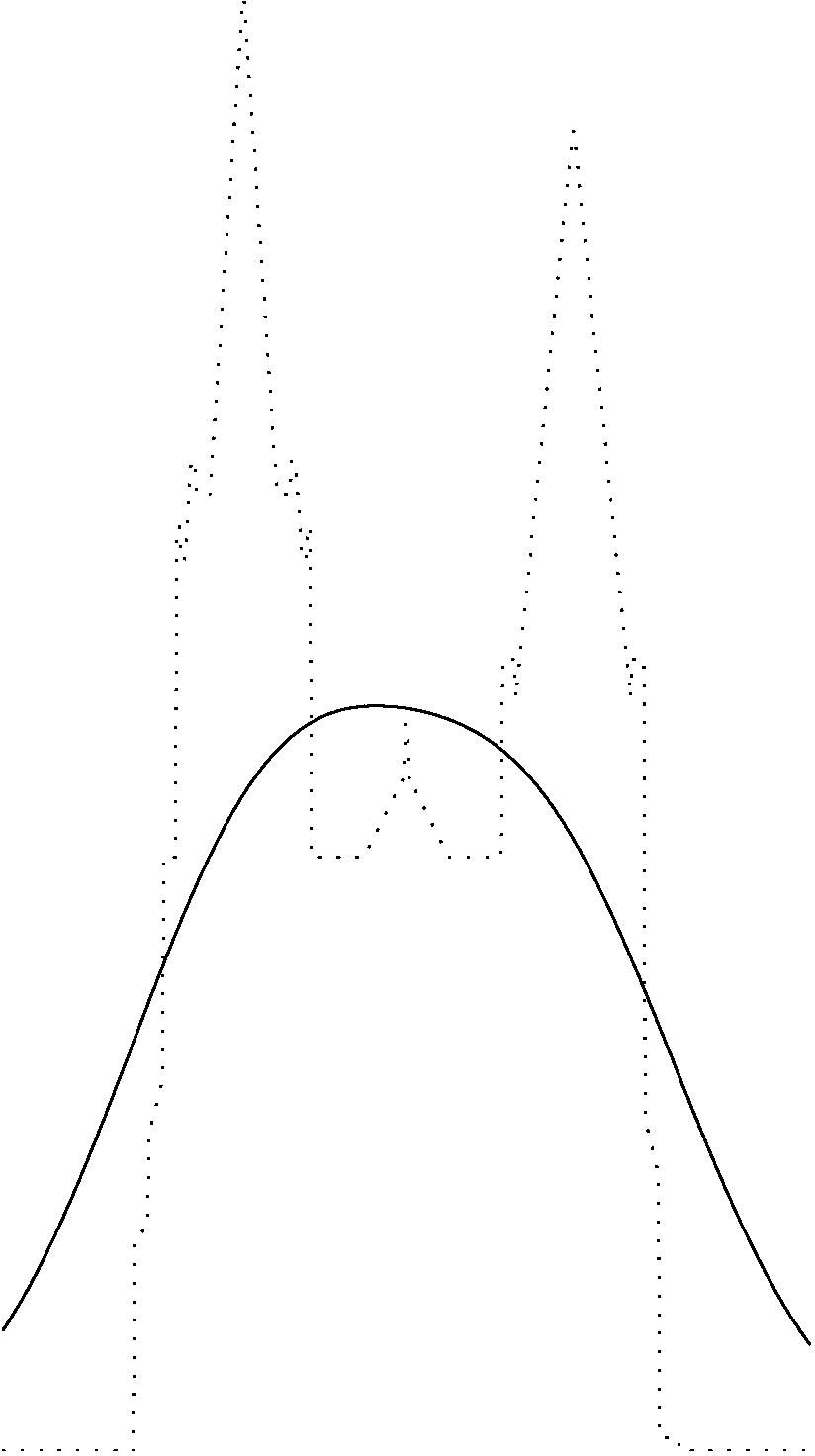}%
  \end{center}
  \@thanks
  \vfill
  \begin{center}
    {\large \bf The Game-Theoretic Probability and Finance Project}
  \end{center}
  \begin{center}
    {\large Working Paper \#\No}
  \end{center}
  \begin{center}
    {\iftwodates\large First posted \firstposted.
    Last revised \@date.\else\large\@date\fi}
  \end{center}
  \begin{center}
    Project web site:\\
    http://www.probabilityandfinance.com
  \end{center}
  \end{titlepage}%
  \setcounter{footnote}{0}%
  \global\let\thanks\relax
  \global\let\maketitle\relax
  \global\let\@thanks\@empty
  \global\let\@author\@empty
  \global\let\@date\@empty
  \global\let\@title\@empty
  \global\let\title\relax
  \global\let\author\relax
  \global\let\date\relax
  \global\let\and\relax
}
\renewenvironment{abstract}{%
  \titlepage
  \null\vfil
  \@beginparpenalty\@lowpenalty
  \begin{center}%
    \Large \bfseries \abstractname
    \@endparpenalty\@M
  \end{center}}%
  {\par\vfill\tableofcontents\endtitlepage}
\renewenvironment{thebibliography}[1]
  {\section*{\refname}%
  \addcontentsline{toc}{section}{\refname}
  \@mkboth{\MakeUppercase\refname}{\MakeUppercase\refname}%
  \list{\@biblabel{\@arabic\c@enumiv}}%
    {\settowidth\labelwidth{\@biblabel{#1}}%
    \leftmargin\labelwidth
    \advance\leftmargin\labelsep
    \@openbib@code
    \usecounter{enumiv}%
    \let\p@enumiv\@empty
    \renewcommand\theenumiv{\@arabic\c@enumiv}}%
    \sloppy
    \clubpenalty4000
    \@clubpenalty \clubpenalty
    \widowpenalty4000%
    \sfcode`\.\@m}
    {\def\@noitemerr
    {\@latex@warning{Empty `thebibliography' environment}}%
  \endlist}
  \newcommand{\bluebegin}{\begingroup\color{blue}}
  \newcommand{\blueend}{\endgroup}
\newcommand{\st}{\mathrel{|}}		
\renewcommand{\And}{\mathrel{\&}}	
\newcommand{\dd}{\mathrm{d}}		
\newcommand{\K}{\mathcal{K}}		
\newcommand{\BBB}{\mathcal{B}}		
\newcommand{\FFF}{\mathcal{F}}		
\DeclareMathOperator{\III}{\mathbb{I}}		
\newcommand{\bbbp}{\mathbb{P}}		
\DeclareMathOperator{\UpProb}{\overline{\bbbp}}		
\newcommand{\bbbe}{\mathbb{E}}		
\DeclareMathOperator{\Expect}{\bbbe}
\DeclareMathOperator{\var}{v}		
\DeclareMathOperator{\qvar}{w}		
\DeclareMathOperator{\vi}{vi}		
\DeclareMathOperator{\MM}{M}		
\DeclareMathOperator{\DD}{D}		
\newcommand{\bbbr}{\mathbb{R}}		
\newcommand{\bbbz}{\mathbb{Z}}		
\theoremstyle{plain}
\newtheorem{theorem}{Theorem}
\newtheorem{proposition}{Proposition}
\newtheorem{corollary}{Corollary}
\newtheorem{lemma}{Lemma}
\theoremstyle{definition}
\newtheorem{remark}{Remark}
\newtheorem*{question}{Question}
\title{Rough paths in idealized financial markets}
\author{Vladimir Vovk}
\newcommand{\No}{35}
\newcommand{\firstposted}{May 4, 2010}
\begin{document}
\maketitle

\begin{abstract}
  This paper considers possible price paths
  of a financial security in an idealized market.
  Its main result is that the variation index of typical price paths
  is at most 2;
  in this sense, typical price paths are not rougher than typical paths of Brownian motion.
  We do not make any stochastic assumptions
  and only assume that the price path is positive and right-continuous.
  The qualification ``typical'' means that there is a trading strategy
  (constructed explicitly in the proof)
  that risks only one monetary unit
  but brings infinite capital when the variation index of the realized price path exceeds 2.
  The paper also reviews some known results for continuous price paths
  and lists several open problems.
\end{abstract}

\section{Introduction}

``Rough paths'' are functions with infinite total variation,
and their roughness is usually measured using the notion of $p$-variation
(\cite{Norvaisa:2006}, p.~102).
Rough paths are ubiquitous in the theory of stochastic processes,
but in recent years they have been actively studied in non-probabilistic settings as well
(see, e.g., \cite{Dudley/Norvaisa:2011} and \cite{Lyons:1998}).
This paper is a contribution to this area of research,
studying price paths of financial securities in idealized markets.
It comes from the tradition of ``game-theoretic probability''
(an approach to probability going back to von Mises and Ville).
No probabilistic assumptions are made about the evolution of security prices
(a non-stochastic notion of probability can be defined,
but this step is optional).
The early work on price paths in game-theoretic probability
relied on using non-standard analysis
(as in \cite{Shafer/Vovk:2001});
this paper follows Takeuchi \emph{et al.}'s recent paper \cite{Takeuchi/etal:2009}
in avoiding non-standard analysis.

We will consider the price path of one financial security over a finite time interval $[0,T]$.
Our key assumption is that the market in our security is efficient,
in the following weak sense:
a prespecified trading strategy risking only 1 monetary unit
will not bring infinite capital at time $T$.
This assumption is not required for our mathematical results,
but is useful in their interpretation
and justifies our terminology:
we say that a property holds for typical price paths
if there is a trading strategy risking only 1 monetary unit
that brings infinite capital at time $T$ whenever the property fails.
Our other assumption is that the interest rate over the time interval $[0,T]$ is 0;
this assumption is easy to relax and is made only for simplicity.

Let $\omega:[0,T]\to[0,\infty)$ be the price path of our financial security;
in this paper we always assume that it is positive
(meaning $\omega\ge0$; this assumption is usually satisfied in real markets).
Section~\ref{sec:cadlag} discusses the case
where $\omega$ is known to be c\`adl\`ag
(i.e., right-continuous and with left limits everywhere).
In this case we can only prove
that the $p$-variation of a typical $\omega$ is finite when $p>2$.
In Section~\ref{sec:continuous} we consider the case
where $\omega$ is known to be continuous.
In this case our understanding is deeper
and we describe briefly some of the much stronger results obtained in \cite{\CTIV}.
A typical result is that the $p$-variation of a non-constant typical $\omega$
is finite when $p>2$ and infinite when $p\le2$;
in particular, the variation index $\vi(\omega)$ of a typical $\omega$
is either $0$ or $2$.
In the last section, Section~\ref{sec:no-borrowing},
we consider markets where borrowing (both borrowing cash and borrowing securities)
is prohibited;
for such markets, the assumption that $\omega$ is continuous loses much of its power.
In Appendix~\ref{app:B} we extend the main result of Section~\ref{sec:cadlag}
to the case where the price path is only assumed to be right-continuous.
In Appendix~\ref{app:C}
we discuss the rationale behind our definitions.
\ifFULL\bluebegin
  Details of a routine proof of a statement in this appendix
  is given in Appendix~\ref{app:D}.
\blueend\fi

\ifFULL\bluebegin
  To obtain interesting results,
  we need to impose some severe restrictions on the sample space.
  Two main kinds of restrictions:
  (a) $\omega$ is continuous;
  (b) $\omega$ is c\`adl\`ag (or right-continuous) and positive
  (simply c\`adl\`ag would be hopeless).
\blueend\fi

Our approach to rough paths is somewhat different from the standard one,
introduced by Lyons \cite{Lyons:1998}.
Lyons's theory can deal directly only with the rough paths $\omega$ satisfying $\vi(\omega)<2$
(by means of Youngs' theory,
which is described in, e.g., \cite{Dudley/Norvaisa:2011}, Section~2.2).
In order to treat rough paths satisfying $\vi(\omega)\in[n,n+1)$,
where $n=2,3,\ldots$,
we need to postulate the values of the iterated integrals
$X^i_{s,t}:=\int_{s<u_1<\cdots<u_i<t}\dd\omega(u_1)\cdots\dd\omega(u_i)$
for $i=2,\ldots,n$ and $0\le s<t\le T$
(satisfying ``Chen's consistency condition'').
It is not clear how to avoid making an arbitrary choice here.
Our main result (Theorem~\ref{thm:main})
says that only the case $n=2$ is relevant for our idealized markets,
and in this case Lyons's theory is much simpler than in general;
its application in the context of this paper becomes much more feasible,
and would be an interesting direction of further research.

For further discussion of connections with the standard theory of mathematical finance,
including the First and Second Fundamental Theorems of Asset Pricing
and various versions of the no-arbitrage condition,
see \cite{\CTIV}, Sections~1 and~12.

This working paper is based on my talks with the same title
at the Tenth International Vilnius Conference
on Probability and Mathematical Statistics
(section ``Random Processes'', session ``Rough Paths'', 29 June 2010)
and the Third Workshop on Game-theoretic Probability and Related Topics
(21 June 2010).
This is one of the reasons why it not only contains mathematical results
but also discusses the choice of definitions and lists some open problems.
A shorter version of the paper has been published as \cite{\this}.

The words ``positive'' and ``increasing''
are always understood in the wide sense of ``$\ge$'';
the adverb ``strictly'' will be added when needed.
Our notation for logarithms is $\ln$ (natural) and $\log$ (binary).

\section{Volatility of c\`adl\`ag price paths}
\label{sec:cadlag}

Let $\Omega$ be the set $D^+[0,T]$ of all positive c\`adl\`ag functions
$\omega:[0,T]\to[0,\infty)$;
we will call $\Omega$ our \emph{sample space}.
For each $t\in[0,T]$,
$\FFF^{\circ}_t$ is defined to be the smallest $\sigma$-algebra on $\Omega$
that makes all functions
$\omega\mapsto\omega(s)$, $s\in[0,t]$, measurable;
$\FFF_t$ is defined to be the universal completion of $\FFF^{\circ}_t$.
A \emph{process} $S$ is a family of functions
$S_t:\Omega\to[-\infty,\infty]$, $t\in[0,T]$,
each $S_t$ being $\FFF_t$-measurable
(we drop the adjective ``adapted'').
An \emph{event} is an element of the $\sigma$-algebra $\FFF_T$.
Stopping times $\tau:\Omega\to[0,T]\cup\{\infty\}$ w.r.\ to the filtration $(\FFF_t)$
and the corresponding $\sigma$-algebras $\FFF_{\tau}$
are defined as usual;
$\omega(\tau(\omega))$ and $S_{\tau(\omega)}(\omega)$
will be simplified to $\omega(\tau)$ and $S_{\tau}(\omega)$,
respectively
(occasionally,
the argument $\omega$ will be omitted
in other cases as well).

\ifFULL\bluebegin
  As $\Omega=D^+[0,T]$ is the set of positive c\`adl\`ag functions,
  the $\sigma$-field $\FFF_T$ is the Borel $\sigma$-algebra
  of the Skorokhod topology:
  see \cite{Jacod/Shiryaev:2003}, Theorem VI.1.14,
  \cite{Maisonneuve:1972}, Theorem 2a,
  and \cite{Billingsley:1968}.
\blueend\fi

\begin{remark}
  We define $\FFF_t$ to be the universal completion of $\FFF^{\circ}_t$
  in order for the hitting times of closed sets in $\bbbr$ to be stopping times,
  which will be used in the proof of Lemma~\ref{lem:Doob} below.
  Alternatively, we could define $\FFF_t$ as the smaller
  (\cite{Dellacherie/Meyer:1978}, Theorem III.33)
  $\sigma$-algebra generated by the $\FFF^{\circ}_t$-analytic sets:
  see the argument in the proof of Lemma~\ref{lem:Doob}.
\end{remark}

  \begin{remark}
    Another approach would be to define  
    $\FFF_t:=\FFF^{\circ}_{t+}$ (except that $\FFF_T:=\FFF^{\circ}_T$)
    and to use the fact that the hitting times of open sets in $\bbbr$ are stopping times.
    The disadvantage of this definition is that using the filtration $\FFF^{\circ}_{t+}$
    allows ``peeking ahead''.
    It can be argued that in our context
    peeking ahead, just one instant into the future, is tolerable:
    since the price path is right-continuous,
    we can avoid peeking by updating our portfolio an instant later
    rather than now;
    the security price will not change.
    But the counter-argument is that if we are allowed to peek even an instance ahead,
    we can profit greatly even from a single jump in the price process.
    (The first argument works ``forward'', and the second ``backwards''.)
    Therefore, our definition does not use $\FFF^{\circ}_{t+}$.
    \ifFULL\bluebegin
      The standard assumption that the filtration $(\FFF_t)$ is right-continuous
      is wrong: it allows ``peeking ahead'',
      according to an expression from Karatzas and Shreve \cite{Karatzas/Shreve:1991}.
      (Cf., however, the result about the right-continuity of the augmented filtration:
      \cite{Karatzas/Shreve:1991}, Proposition 2.7.7.)
    \blueend\fi
  \end{remark}

The class of allowed trading strategies is defined in two steps.
A \emph{simple trading strategy} $G$ consists of the following components:
$c\in\bbbr$ (the \emph{initial capital});
an increasing sequence of stopping times
$\tau_1\le\tau_2\le\cdots$;
and, for each $n=1,2,\ldots$,
a bounded $\FFF_{\tau_{n}}$-measurable function $h_n$.
It is required that, for any $\omega\in\Omega$,
only finitely many of $\tau_n(\omega)$ should be finite.
(Intuitively, simple trading strategies
are allowed to trade only finitely often.
Including the initial capital in the trading strategy
is a standard convention in mathematical finance.)
To such $G$ corresponds the \emph{simple capital process}
\begin{equation}\label{eq:simple-capital}
  \K^{G}_t(\omega)
  :=
  c
  +
  \sum_{n=1}^{\infty}
  h_n(\omega)
  \bigl(
    \omega(\tau_{n+1}\wedge t)-\omega(\tau_n\wedge t)
  \bigr),
  \quad
  t\in[0,T]
\end{equation}
(with the zero terms in the sum ignored);
the value $h_n(\omega)$ will be called the \emph{position}
taken at time $\tau_n$,
and $\K^{G}_t(\omega)$ will sometimes be referred to as the capital process of $G$.

A \emph{positive capital process} is any process $S$
that can be represented in the form
\begin{equation}\label{eq:positive-capital}
  S_t(\omega)
  :=
  \sum_{m=1}^{\infty}
  \K^{G_m}_t(\omega),
\end{equation}
where the simple capital processes $\K^{G_m}_t(\omega)$
are required to be positive, for all $t\in[0,T]$ and $\omega\in\Omega$,
and the positive series $\sum_{m=1}^{\infty}c_m$ is required to converge,
where $c_m$ is the initial capital of $G_m$.
The sum (\ref{eq:positive-capital}) is always positive
but allowed to take value $\infty$.
Since $\K^{G_m}_0(\omega)=c_m$ does not depend on $\omega$,
$S_0(\omega)$ also does not depend on $\omega$
and will sometimes be abbreviated to $S_0$.
In our discussions
we will sometimes refer to the sequence $(G_m)_{m=1}^{\infty}$
as a \emph{trading strategy risking $\sum_m c_m$}
and refer to (\ref{eq:positive-capital})
as the \emph{capital process} of this strategy.

\begin{remark}
  The intuition behind the definition of positive capital processes
  is that the initial capital is split into infinitely many accounts
  and the trader runs a separate simple trading strategy on each of these accounts.
  Our definition of simple trading strategies only involves
  the position taken in security, not the cash position.
  The cash position is determined uniquely from the condition
  that the strategy should be self-financing
  (see Section~\ref{sec:no-borrowing}, p.~\pageref{p:cash},
  for further details),
  and in many cases there is no need to mention it explicitly.
  For the explicit connection between our notion of a simple trading strategy
  and the standard definition of a self-financing trading strategy
  (specifying explicitly the cash position, as in, e.g., \cite{Shiryaev:1999}, Section~VII.1a),
  see \cite{\CTIV}, Subsection~2.1.
\end{remark}

\begin{remark}
  Our main result, Theorem~\ref{thm:main}, will continue to hold
  even if the $h_n$ in (\ref{eq:simple-capital}) are required to be constants;
  this will be clear from its proof.
\end{remark}

We say that a set $E\subseteq\Omega$ is \emph{null}
if there is a positive capital process $S$ such that $S_0=1$
and $S_T(\omega)=\infty$ for all $\omega\in E$.
A property of $\omega\in\Omega$ will be said to hold \emph{almost surely} (a.s.),
or \emph{for typical $\omega$},
if the set of $\omega$ where it fails is null.
Intuitively,
we expect such a property to be satisfied in a market
that is efficient at least to some degree.

For each $p\in(0,\infty)$,
the \emph{$p$-variation} $\var_p(f)$ of a function $f:[0,T]\to\bbbr$
is defined as
\begin{equation}\label{eq:var}
  \var_p(f)
  :=
  \sup_\kappa
  \sum_{i=1}^{n}
  \left|
    f(t_{i})-f(t_{i-1})
  \right|^p,
\end{equation}
where $n$ ranges over all strictly positive integers
and $\kappa$ over all \emph{partitions} $0=t_0\le t_1\le\cdots\le t_n=T$
of the interval $[0,T]$.
The \emph{total variation} of a function is the same thing as its 1-variation.
It is obvious that, when $f$ is bounded, there exists a unique number
$\vi(f)\in[0,\infty]$,
called the \emph{variation index} of $f$,
such that $\var_p(f)$ is finite when $p>\vi(f)$
and infinite when $p<\vi(f)$.
It is easy to see that $\vi(f)\notin(0,1)$ when $f$ is continuous,
but in general $\vi(f)$ can take any values in $[0,\infty]$.
\ifFULL\bluebegin
  Proof of the existence and uniqueness:
  multiply $f$ by a small $\epsilon>0$
  so that $\sup\epsilon f-\inf\epsilon f<1$;
  this will change $\var_p$ by a strictly positive factor,
  and it is obvious that $\var_p(\epsilon f)$ decreases
  as $p$ increases.
\blueend\fi

\begin{theorem}\label{thm:main}
  For typical $\omega\in D^+[0,T]$,
  \begin{equation}\label{eq:main}
    \vi(\omega)\le2.
  \end{equation}
\end{theorem}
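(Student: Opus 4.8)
The plan is to construct, for the failure event $E=\{\omega:\vi(\omega)>2\}$, a single positive capital process $S$ with $S_0=1$ and $S_T=\infty$ throughout $E$. Since $\vi(\omega)>2$ holds exactly when $\var_p(\omega)=\infty$ for some rational $p>2$, and since a combination $\sum_k\epsilon_kS^{(k)}$ of positive capital processes with $\epsilon_k>0$ and $\sum_k\epsilon_k=1$ is itself a positive capital process (this is immediate from the definition (\ref{eq:positive-capital})) that is infinite wherever any summand is, it suffices to treat one fixed rational $p>2$ at a time and then take an outer combination over the countably many such $p$. Two further reductions precede the construction. First, I would replace $\omega$ by $\ln\omega$: because $\omega$ is positive and càdlàg, hence bounded, one has $\var_p(\omega)\le(\max\omega)^p\var_p(\ln\omega)$, so it is enough to make money on $\{\var_p(\ln\omega)=\infty\}$. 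Second, I would localize to a fixed multiplicative band $[b,ab]$ (with $a>1$ fixed and $b$ ranging over a geometric grid): its boundary is closed, so the first exit time is a stopping time under the universal-completion convention recorded in the Remark, and I would switch the strategy off on exit and sum the countably many pieces, invoking the maximal inequality of Lemma~\ref{lem:Doob} to keep this book-keeping controlled. On such a band $\ln\omega$ is bounded and makes only finitely many oscillations, so the strategies below are genuine simple strategies.

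The core is a single-scale, long-only \emph{variance-harvesting} strategy that exploits positivity of $\omega$ to stay positive without any reserve. Fix a scale $\delta=a^{-k}$, set $\sigma_0=0$, and let $\sigma_{j+1}$ be the first time after $\sigma_j$ at which $|\ln\omega-\ln\omega(\sigma_j)|\ge\delta$; let $n_k$ be the number of completed steps (finite on the band). Running the constant-fraction strategy that keeps a fixed fraction $f\in(0,1)$ of wealth in the security, rebalanced at the $\sigma_j$, the relative returns $r_j:=\omega(\sigma_{j+1})/\omega(\sigma_j)-1\ge-1$ give terminal wealth $W=W_0\prod_j(1+fr_j)>0$, and
\[
  \ln\frac{W}{W_0}=f\,\ln\frac{\omega(\sigma_{n_k})}{\omega(\sigma_0)}+\sum_j\bigl(\ln(1+fr_j)-f\ln(1+r_j)\bigr),
\]
where, by concavity of $\ln$, each summand is nonnegative, and on a completed step it is of order $r_j^2\asymp\delta^2$. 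Thus this is a \emph{positive} capital process whose terminal profit is at least $c\,n_k\delta^2$ minus a band-dependent constant coming from the first term, the constant being uniformly controlled because $\omega$ stays in $[b,ab]$.

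I would then combine the scales $\delta_k=a^{-k}$ with weights $\epsilon_k\propto a^{-k(p-2)}$, which are summable precisely because $p>2$. The total initial capital is then finite, and the total terminal capital is bounded below by a constant multiple of $\sum_k\epsilon_k\,n_k\delta_k^2\asymp\sum_k n_k\,a^{-kp}$ minus a finite constant. The jumps of a càdlàg path cause no difficulty here, since the only properties of the returns used are $r_j\ge-1$ and $\ln(1+fr_j)-f\ln(1+r_j)\ge0$, both valid for arbitrarily large moves.

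The main obstacle is the purely analytic comparison that closes the argument: that this multi-scale oscillation sum diverges whenever $\var_p(\ln\omega)=\infty$, i.e.\ $\var_p(\ln\omega)\lesssim\sum_k n_k(\omega)\,a^{-kp}$, so that the harvested capital is indeed infinite on the target event. Establishing this bound between the $p$-variation and the counts $n_k$ of $\delta$-oscillations, and simultaneously verifying that the positivity estimate, the finiteness of the number of trades, and the band-constants are uniform across all scales $k$ and all bands $[b,ab]$ so that the countable assembly really yields one positive capital process of unit initial capital, is where the real work lies.
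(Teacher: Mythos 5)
Your overall architecture --- a countable union over rational $p>2$, a positive oscillation-harvesting strategy at each scale $\delta_k$, scale weights $\epsilon_k\propto\delta_k^{p-2}$ summable precisely because $p>2$, and a final comparison between $\var_p$ and the weighted oscillation counts --- is the same as the paper's, and your constant-fraction rebalancing strategy on a logarithmic grid is a legitimate positive alternative to the paper's basic building block, which is the Doob-type strategy of Lemma~\ref{lem:Doob} on the linear grid of intervals $(k2^{-j},(k+1)2^{-j})$, positive because it starts with capital $k2^{-j}$ and only ever holds position $0$ or $1$. The problem is that the step you defer as ``where the real work lies'' is not a technical verification but the mathematical core of the theorem. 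The comparison $\var_p(\ln\omega)\lesssim\sum_k n_k\,\delta_k^{p}$ is exactly the Bruneau--Stricker argument, and the paper proves it: for each positive increment $\omega(t_{i-1})<\omega(t_i)$ of a partition one takes the smallest $j=j(i)$ such that some grid interval $[k2^{-j},(k+1)2^{-j}]$ fits inside $[\omega(t_{i-1}),\omega(t_i)]$, notes that then $2^{-j(i)}\ge\frac14\bigl(\omega(t_i)-\omega(t_{i-1})\bigr)$ and that distinct $i$ with the same $j(i)$ yield distinct upcrossings at scale $2^{-j(i)}$, and concludes the second inequality in (\ref{eq:to-explain}); the negative increments are handled by $\DD(\omega,h)\le\MM(\omega,h)+(\text{number of grid levels})$, which costs only a bounded additive term (the $-\frac12$ in (\ref{eq:inequality})). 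An analogous lemma for your exit-time oscillation counts $n_k$ is plausible but also requires proof; as submitted, the proposal does not close.

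There is a second, subtler gap in the localization. Your claim that ``the band-constants are uniform across all scales $k$ and all bands'' cannot be right as stated: a logarithmic oscillation of size $\delta$ inside the band $[2^m,2^{m+1}]$ contributes about $(2^m\delta)^p$ to $\var_p(\omega)$ but only about $w_m\epsilon_k\delta^2$ to your linearized profit bound, so a uniform quantitative comparison would force band weights $w_m\gtrsim 2^{mp}$, which are not summable. The paper's remedy is to fix an upper bound $\sup\omega\le 2^L$ first, prove the implication (\ref{eq:inequality}) for each $L$ with a capital process of initial value $O(2^L)$, and only then take a countable union of null events over $L$; you need the same device (mere boundedness of each c\`adl\`ag path is not enough, since the strategy must be fixed in advance). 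Near $\omega=0$ the multiplicative bands accumulate, $\ln\omega$ is unbounded below, and the increments straddling several bands are captured by none of your band-local strategies, so they need a separate accounting (the paper's linear grid avoids all of this automatically: every increment upcrosses a grid interval at its own scale $2^{-j(i)}$, and the average over $k$ of the Doob strategies converges because the one for $(k2^{-j},(k+1)2^{-j})$ needs initial capital only $k2^{-j}$). None of these obstacles looks fatal, but fixing them amounts to doing the work the paper actually does.
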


\ifFULL\bluebegin
  It is very tempting to complement Theorem~\ref{thm:main}
  with the following statement:
  \begin{quote}
    Given that $\omega$ is continuous but not constant
    in the neighbourhood of some point,
    $\vi(\omega)=2$ a.s.
  \end{quote}
  However, this statement does not follow from Proposition~\ref{prop:vi-continuous}:
  Market can easily bust Trader following
  the strategy from the proof of Proposition~\ref{prop:vi-continuous}
  by choosing a large enough jump in the right direction.
\blueend\fi

In the case of semimartingales,
the property (\ref{eq:main})
was established by Lepingle (\cite{Lepingle:1976}, Theorem 1(a)).
Intuitively, Theorem \ref{thm:main}
says that price paths cannot be too rough.
In fact, this theorem
can be strengthened to say that there is a trading strategy risking at most 1 monetary unit
whose capital process is $\infty$ at any time $t$
such that the variation index of $\omega$ over $[0,t]$
is greater than 2.
(This remark is also applicable to all other results of this kind in this paper.)
Theorem~\ref{thm:main} will be proved
using Stricker's \cite{Stricker:1979} method
(which is an extension of Bruneau's \cite{Bruneau:1979} method
from continuous to c\`adl\`ag functions).

\ifFULL\bluebegin
  Without loss of generality we can impose the restriction $\omega(0)=1$.
\blueend\fi

Let $\MM_a^b(f)$ (resp.\ $\DD_a^b(f)$)
be the number of upcrossings (resp.\ downcrossings) of an open interval $(a,b)$
by a function $f:[0,T]\to\bbbr$ during the time interval $[0,T]$.
For each $h>0$ set
\begin{equation*}
  \MM(f,h)
  :=
  \sum_{k\in\bbbz}
  \MM_{kh}^{(k+1)h} (f),
  \quad
  \DD(f,h)
  :=
  \sum_{k\in\bbbz}
  \DD_{kh}^{(k+1)h} (f).
\end{equation*}
The key ingredient of the proof of Theorem~\ref{thm:main}
is the following game-theoretic version of Doob's upcrossings inequality:
\begin{lemma}\label{lem:Doob}
  Let $0\le a<b$ be real numbers.
  There exists a positive simple capital process $S$
  that starts from $S_0=a$ and satisfies,
  for all $\omega\in\Omega$,
  \begin{equation}\label{eq:Doob}
    S_T(\omega)
    \ge
    (b-a)\MM_a^b(\omega).
  \end{equation}
\end{lemma}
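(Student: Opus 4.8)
The plan is to make rigorous the classical Doob upcrossing strategy: buy one unit of the security whenever its price falls to $a$, hold it until the price rises to $b$, and then sell; each completed upcrossing then locks in a profit of at least $b-a$, while the initial endowment $a$ guarantees solvency throughout. Concretely, I would define two interleaved sequences of stopping times by $\sigma_1:=\inf\{t:\omega(t)\le a\}$, $\theta_k:=\inf\{t\ge\sigma_k:\omega(t)\ge b\}$, and $\sigma_{k+1}:=\inf\{t\ge\theta_k:\omega(t)\le a\}$ (with $\inf\emptyset:=\infty$), so that $\sigma_1\le\theta_1\le\sigma_2\le\theta_2\le\cdots$. Relabelling these into a single increasing sequence via $\tau_{2k-1}:=\sigma_k$, $\tau_{2k}:=\theta_k$, taking the constant positions $h_{2k-1}:=1$ and $h_{2k}:=0$, and setting the initial capital $c:=a$ gives a simple trading strategy $G$ whose capital process reduces to
\begin{equation*}
  \K^G_t(\omega)=a+\sum_k\bigl(\omega(\theta_k\wedge t)-\omega(\sigma_k\wedge t)\bigr).
\end{equation*}

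Before computing, I would check that $G$ is a legitimate simple trading strategy. The target sets $[0,a]$ and $[b,\infty)$ are closed, so each $\sigma_k$ and $\theta_k$ is a hitting time of a closed set and hence a stopping time w.r.\ to $(\FFF_t)$; this is exactly the reason the filtration was defined as the universal completion of $\FFF^{\circ}_t$. The positions $h_n\in\{0,1\}$ are constant, hence bounded and $\FFF_{\tau_n}$-measurable (matching the remark that constant positions suffice). Finally, a c\`adl\`ag function on the compact interval $[0,T]$ can cross a fixed nondegenerate interval only finitely often --- otherwise the crossing points would accumulate at some $t^{*}$ and destroy the left limit $\omega(t^{*}-)$ --- so $\MM_a^b(\omega)<\infty$ and only finitely many $\tau_n(\omega)$ are finite. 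The two pointwise facts I would then extract from right-continuity are $\omega(\sigma_k)\le a$ and $\omega(\theta_k)\ge b$ whenever these times are finite, since the times approximating each infimum lie to its right, so right-continuity transfers the defining inequality to the limit.

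Positivity of $\K^G$ and the inequality (\ref{eq:Doob}) then follow from bookkeeping. At any $t$, each completed upcrossing ($\theta_k\le t$) contributes $\omega(\theta_k)-\omega(\sigma_k)\ge b-a\ge0$, while terms with $\sigma_k>t$ vanish. If no position is open at $t$ we obtain $\K^G_t\ge a\ge0$; if a position is open ($\sigma_k\le t<\theta_k$) its term is $\omega(t)-\omega(\sigma_k)\ge\omega(t)-a$, whence $\K^G_t\ge a+(\omega(t)-a)=\omega(t)\ge0$, using $\omega\ge0$. Evaluating at $t=T$, the completed upcrossings number exactly $\MM_a^b(\omega)$ and contribute at least $(b-a)\MM_a^b(\omega)$, while a possible final open position contributes at least $\omega(T)-a\ge-a$; adding the initial $a$ gives $\K^G_T\ge(b-a)\MM_a^b(\omega)$, as claimed. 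I expect the main obstacle to be the measurability point --- verifying that the hitting times $\sigma_k,\theta_k$ are genuine stopping times without assuming right-continuity of the filtration --- which is precisely what forces the passage to the universal completion $\FFF_t$; the financial and combinatorial content is the standard Doob argument and becomes routine once the strategy is written down.
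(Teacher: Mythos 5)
Your proposal is correct and follows essentially the same route as the paper: the same greedy buy-at-$a$/sell-at-$b$ stopping times, the same constant positions and initial capital $a$, and the same bookkeeping for positivity and for the bound (\ref{eq:Doob}). The one step you assert rather than prove --- that the hitting times of the closed sets $[0,a]$ and $[b,\infty)$ are stopping times for the universally completed filtration --- is exactly the point the paper spells out in detail via the measurable-projection/analytic-set argument, so you have correctly identified where the real work lies.
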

\begin{proof}
  The following standard argument will be easy to formalize.
  A simple trading strategy $G$ leading to $S$
  can be defined as follows.
  The initial capital is $a$.
  At first $G$ takes position $0$.
  When $\omega$ first hits $[0,a]$,
  $G$ takes position $1$ until $\omega$ hits $[b,\infty)$,
  at which point $G$ takes position $0$;
  after $\omega$ hits $[0,a]$,
  $G$ maintains position $1$ until $\omega$ hits $[b,\infty)$,
  at which point $G$ takes position $0$; etc.
  Since $\omega$ is positive, $S$ will also be positive.

  Formally, we define $\tau_1:=\inf\{t\st\omega(t)\in[0,a]\}$
  and, for $n=2,3,\ldots$,
  \begin{equation*} 
    \tau_n
    :=
    \inf\{t\st t>\tau_{n-1} \And \omega(t)\in I_n\},
  \end{equation*}
  where $I_n:=[b,\infty)$ for even $n$
  and $I_n:=[0,a]$ for odd $n$.
  (As usual, the expression $\inf\emptyset$ is interpreted as $\infty$.)
  Since $\omega$ is a right-continuous function and $[0,a]$ and $[b,\infty)$
  are closed sets,
  the infima in the definitions of $\tau_1,\tau_2,\ldots$ are attained.
  Therefore,
  $\omega(\tau_1)\le a$,
  $\omega(\tau_2)\ge b$,
  $\omega(\tau_3)\le a$,
  $\omega(\tau_4)\ge b$, and so on.
  The positions taken by $G$ at the times $\tau_1,\tau_2,\ldots$
  are $h_1:=1$, $h_2:=0$, $h_3:=1$, $h_4:=0$, etc.,
  and the initial capital is $a$.
  Let $n$ be the largest integer such that $\tau_n\le T$
  (with $n:=0$ when $\tau_1=\infty$).
  Now we obtain from (\ref{eq:simple-capital}):
  if $n$ is even,
  \begin{align*}
    S_T(\omega)
    &=
    \K^{G}_T(\omega)\\
    &=
    a
    +
    (\omega(\tau_2)-\omega(\tau_1))
    +
    (\omega(\tau_4)-\omega(\tau_3))
    +\cdots+
    (\omega(\tau_n)-\omega(\tau_{n-1}))\\
    &\ge
    a + (b-a)\MM_a^b(\omega),
  \end{align*}
  and if $n$ is odd,
  \begin{align*}
    S_T(\omega)
    &=
    \K^{G}_T(\omega)\\
    &=
    a
    +
    (\omega(\tau_2)-\omega(\tau_1))
    +
    (\omega(\tau_4)-\omega(\tau_3))
    +\cdots+
    (\omega(\tau_{n-1})-\omega(\tau_{n-2}))\\
    &\quad+
    (\omega(T)-\omega(\tau_{n}))\\
    &\ge
    a + (b-a)\MM_a^b(\omega) 
    +
    (\omega(T)-\omega(\tau_{n}))\\
    &\ge
    a + (b-a)\MM_a^b(\omega) 
    +
    (0-a)
    =
    (b-a)\MM_a^b(\omega);
  \end{align*}
  in both cases, (\ref{eq:Doob}) holds.
  In particular, $S_T(\omega)$ is positive;
  the same argument applied to $t\in[0,T]$ in place of $T$
  shows that $S_t(\omega)$ is positive for all $t\in[0,T]$.

  We have $\tau_n(\omega)<\infty$ for only finitely many $n$
  since $\omega$ is c\`adl\`ag:
  see, e.g., \cite{Dellacherie/Meyer:1978}, Theorem IV.22.
  (This is the only place in this proof
  where we use the assumption that $\omega$ is c\`adl\`ag
  rather than merely right-continuous.)

  It remains to check that each $\tau_n$ is a stopping time;
  we will do so using induction in $n$.
  Let $t\in[0,T]$.
  Since $\omega$ is right-continuous and $[0,a]$ is closed,
  the set $\{\tau_1\le t\}$ is the projection onto $\Omega$
  of the set $A:=\{(s,\omega)\in[0,t]\times\Omega\st\omega(s)\in[0,a]\}$
  (cf.\ \cite{Dellacherie/Meyer:1978}, IV.51(c)).
  Since $A\in\BBB_t\times\FFF^{\circ}_t$,
  where $\BBB_t$ is the Borel $\sigma$-algebra on $[0,t]$
  and $\BBB_t\times\FFF^{\circ}_t$ is the product $\sigma$-algebra,
  the projection $\{\tau_1\le t\}$ is an $\FFF^{\circ}_t$-analytic set
  (according to \cite{Dellacherie/Meyer:1978}, Theorem III.13(3)).
  Therefore, $\{\tau_1\le t\}\in\FFF_t$
  (according to \cite{Dellacherie/Meyer:1978}, Theorem III.33).
  We can see that $\tau_1$ is a stopping time.

  Now let $n\in\{2,3,\ldots\}$ and suppose that $\tau_{n-1}$ is a stopping time.
  Let $t\in[0,T]$.
  Since $\omega$ is right-continuous and $I_n$ is closed,
  the set $\{\tau_n\le t\}$ is the projection onto $\Omega$
  of the set $A:=\{(s,\omega)\in[0,t]\times\Omega\st s>\tau_{n-1}\And\omega(s)\in I_n\}$.
  Since $A\in\BBB_t\times\FFF^{\circ}_t$,
  the same argument as in the previous paragraph shows that $\{\tau_n\le t\}\in\FFF_t$;
  therefore, $\tau_n$ is a stopping time.

  Finally, let us check carefully that the set $\{\tau_n\le t\}$
  is indeed the projection onto $\Omega$
  of $A:=\{(s,\omega)\in[0,t]\times\Omega\st s>\tau_{n-1}\And\omega(s)\in I_n\}$,
  assuming $n>1$
  (the corresponding assertion for $n=1$ is even easier).
  One direction is trivial:
  $s\in[0,t]$, $s>\tau_{n-1}$, and $\omega(s)\in I_n$ immediately implies $\tau_n\le t$.
  In the opposite direction, suppose $\tau_n(\omega)\le t$.
  There is $s\in[0,t]$ and a sequence $t_1\ge t_2\ge\cdots$ such that $\lim_{i\to\infty}t_i=s$
  and, for all $i$,
  $t_i>\tau_{n-1}(\omega)$ and $\omega(t_i)\in I_n$.
  Since $\omega$ is right-continuous and $I_n$ is closed,
  $
    \omega(s)
    =
    \lim_{i\to\infty}
    \omega(t_i)
    \in
    I_n
  $.
  We cannot have $s=\tau_{n-1}$ since $\omega(s)\in I_n$ and $\omega(\tau_{n-1})\notin I_n$.
\end{proof}

In fact, in Proposition~\ref{prop:main} below
we will prove a stronger version of Theorem~\ref{thm:main}.
But to state the stronger version
we will need a generalization of the definition (\ref{eq:var}).
Let $\phi:[0,\infty)\to[0,\infty)$.
For $f:[0,T]\to\bbbr$, we set
\begin{equation*} 
  \var_{\phi}(f)
  :=
  \sup_\kappa
  \sum_{i=1}^{n}
  \phi
  \left(\left|
    f(t_{i})-f(t_{i-1})
  \right|\right),
\end{equation*}
where $\kappa$ ranges over all partitions $0=t_0\le t_1\le\cdots\le t_n=T$,
$n=1,2,\ldots$, of $[0,T]$.
\begin{proposition}\label{prop:main}
  Suppose $\phi:(0,\infty)\to(0,\infty)$ satisfies
  \begin{equation}\label{eq:phi}
    \sup_{0<t\le s\le2t}\frac{\phi(s)}{\phi(t)}<\infty
    \text{\quad and\quad}
    \sum_{j=0}^{\infty}
    2^{2j}
    \phi
    \left(
      2^{-j}
    \right)
    <
    \infty.
  \end{equation}
  Then $\var_{\phi}(\omega)<\infty$ a.s.,
  where $\phi(0)$ is set to $0$.
\end{proposition}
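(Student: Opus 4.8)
The plan is to split the statement into a deterministic, combinatorial estimate and a game-theoretic construction built on Lemma~\ref{lem:Doob}. Throughout I may normalize $\omega(0)=1$ (multiplying $\omega$ by a constant changes $\var_\phi$ only by the bounded factor furnished by the first condition in \eqref{eq:phi}, so finiteness is unaffected), and I use that a c\`adl\`ag function on $[0,T]$ is bounded, say $0\le\omega\le L$ with $L=L(\omega)<\infty$. The first step is to prove, for every bounded $f$ and every $\phi$ obeying the doubling bound $C_\phi:=\sup_{0<t\le s\le2t}\phi(s)/\phi(t)<\infty$, a Bruneau-type inequality
\[ \var_\phi(f)\le C\sum_{j\in\bbbz}\phi(2^{-j})\bigl(\MM(f,2^{-j})+\DD(f,2^{-j})\bigr), \]
with $C$ depending only on $C_\phi$. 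The second step is to produce, using Lemma~\ref{lem:Doob}, a single positive capital process with $S_0=1$ whose value at $T$ is infinite whenever the right-hand side diverges; the summability condition in \eqref{eq:phi} is exactly what keeps its initial capital finite.

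For the combinatorial step I would fix a partition and classify each increment by the dyadic scale of its size: an increment with $|f(t_i)-f(t_{i-1})|\in(2^{-j-1},2^{-j}]$ straddles at least one, and at most a bounded number, of the grid cells $(k2^{-j-2},(k+1)2^{-j-2})$, and hence witnesses at least one crossing of the grid at that finer scale. Because the partition intervals are disjoint in time, the crossings witnessed by distinct increments are distinct, so the number of scale-$j$ increments is at most $\MM(f,2^{-j-2})+\DD(f,2^{-j-2})$. The doubling bound then lets me replace $\phi(|f(t_i)-f(t_{i-1})|)$ by a constant multiple of $\phi(2^{-j})$ and to reindex the crossing scale, which yields the displayed inequality after taking the supremum over partitions. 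I would then discard the two easy pieces of the sum: for $2^{-j}>L$ the crossing counts vanish, and the finitely many remaining terms with $2^{-j}\le L$ and $j<0$ are finite because a c\`adl\`ag function has only finitely many crossings of any fixed interval; and I would trade downcrossings for upcrossings through the elementary per-cell bound $|\MM_{a}^{b}-\DD_{a}^{b}|\le1$, whose accumulated error is $\lesssim L\sum_{j\ge0}\phi(2^{-j})2^{j}$, finite by \eqref{eq:phi}. This reduces the whole statement to showing $\sum_{j\ge0}\phi(2^{-j})\MM(\omega,2^{-j})<\infty$ a.s.

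For the game-theoretic step I would, at each scale $j$, apply Lemma~\ref{lem:Doob} to the grid cells $(k2^{-j},(k+1)2^{-j})$ and weight the resulting positive simple capital processes by $\phi(2^{-j})2^{j}$, chosen so that each completed upcrossing contributes $\phi(2^{-j})$ to $S_T$ and hence the scale-$j$ processes together dominate $\phi(2^{-j})\MM(\omega,2^{-j})$. Superposing over all $j$ and $k$ gives, by definition, a positive capital process. The initial capital is where both hypotheses are consumed: in the normalized price range there are about $2^{j}$ cells at scale $j$, each carrying through Lemma~\ref{lem:Doob} a buffer that, after the weighting $\phi(2^{-j})2^{j}$, costs order $\phi(2^{-j})2^{j}$; the scale-$j$ cost is therefore of order $2^{2j}\phi(2^{-j})$, and $\sum_j 2^{2j}\phi(2^{-j})<\infty$ by \eqref{eq:phi}. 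To extend this beyond a bounded range while keeping the total cost finite uniformly in $\omega$ (rather than proportional to $L(\omega)^2$), I would exploit positivity of $\omega$: the buffer needed to operate at a high price level is financed out of the appreciation of a buy-and-hold reserve, which has accrued exactly that amount by the time the price first reaches that level. After this financing the up-front cost collapses to a fixed multiple of $\sum_j 2^{2j}\phi(2^{-j})$, which I normalize to $1$.

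The main obstacle, and the reason the proof follows Stricker rather than Bruneau, is the treatment of jumps. The clean ``one completed upcrossing yields one unit of profit'' accounting behind Lemma~\ref{lem:Doob} is exact for continuous paths, but a position held across a \emph{downward} jump is liquidated below the intended level and incurs an overshoot loss that is matched by no upcrossing. Showing that the aggregate of these overshoots is itself absorbed by the same $\sum_j 2^{2j}\phi(2^{-j})$ budget, and does not destroy positivity of the combined process, is the delicate heart of the argument. The secondary difficulty is bookkeeping: checking that the superposition of countably many simple strategies is a genuine positive capital process, that the financing-by-appreciation leaves every partial sum nonnegative for \emph{every} $\omega\in\Omega$, and that the constants coming from the doubling condition propagate correctly through the reindexing. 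I expect the jump overshoot control to be the hardest part.
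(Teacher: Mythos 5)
Your plan has the same architecture as the paper's proof: Lemma~\ref{lem:Doob} run on the dyadic cells $(k2^{-j},(k+1)2^{-j})$ at every scale, weights chosen so that each completed upcrossing at scale $j$ earns $\phi(2^{-j})$ and the aggregate initial capital is controlled by $\sum_j 2^{2j}\phi(2^{-j})$, the Bruneau--Stricker observation that distinct partition increments witness distinct crossings of cells at the matching dyadic scale, the doubling condition to pass from $\phi(2^{-j(i)})$ to $\phi(\left|\omega(t_i)-\omega(t_{i-1})\right|)$, and the per-cell bound $\left|\MM_a^b-\DD_a^b\right|\le1$ to recover the downward increments at a finite extra cost. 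However, the two steps you single out as the hard parts are not where the work is. The financing-by-appreciation scheme is unnecessary: you do not need an initial capital that is uniform in $\omega$. For each fixed $L$ your own accounting gives a positive capital process with finite initial capital of order $2^{2L}\sum_j 2^{2j}\phi(2^{-j})$ (rescale it to $1$) that is infinite on $\{\sup\omega\le2^L\}\cap\{\var_{\phi}(\omega)=\infty\}$; since every $\omega\in D^+[0,T]$ is bounded and a countable union of null sets is null, taking the union over $L\in\bbbn$ finishes the proof. This is exactly what the paper does, and it removes the one ingredient of your plan that you have not constructed and whose everywhere-positivity would genuinely require care.

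Second, the ``jump overshoot'' you call the delicate heart of the argument is illusory. The strategy behind Lemma~\ref{lem:Doob} never liquidates on a downward move: it buys when $\omega$ hits $[0,a]$, hence at a price $\le a$ (a downward jump only makes the purchase cheaper), and sells only when $\omega$ hits $[b,\infty)$, hence at a price $\ge b$ (an upward jump only increases the gain). Every completed round trip therefore earns at least $b-a$ regardless of jumps, and the sole loss is on the final open position, which is at most the purchase price $\le a$ because $\omega\ge0$; that is precisely why the lemma starts the process at $a$, and the lemma is already stated and proved for c\`adl\`ag (indeed right-continuous) paths. So there is no overshoot term to absorb, and nothing remains of your step two beyond the bookkeeping you describe. (A minor caveat on your opening move: the first condition in (\ref{eq:phi}) bounds $\phi(s)/\phi(t)$ only for $s\ge t$, so rescaling $\omega$ by a constant $c$ controls $\var_{\phi}$ in one direction only, namely when the rescaling is undone by enlarging the argument of $\phi$; the normalization $\omega(0)=1$ is in any case dispensable.)
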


Informally,
the first condition in (\ref{eq:phi}) says that $\phi$ should never increase too fast,
and the second condition says
that $\phi(u)$ should approach $0$ somewhat faster than $u^2$ as $u\to0$.
To obtain Theorem~\ref{thm:main},
set $\phi(u):=u^p$, where $p>2$ is rational,
and notice that the union of countably many null events is always null.
Another simple example of a function $\phi$ satisfying (\ref{eq:phi})
is $\phi(u):=(u/\log^*u)^2$,
where $\log^*u:=1\vee\left|\log u\right|$.
A better example is $\phi(u):=u^2/(\log^*u\log^*\log^*u\cdots)$
(the product is finite if we ignore the factors equal to $1$);
for a proof of (\ref{eq:phi}) for this function,
see \cite{Leung/Cover:1978}, Appendixes B and C
(in this example,
it is essential that $\log$ is binary rather than natural logarithm).
However, even for the last choice of $\phi$,
the inequality $\var_{\phi}(\omega)<\infty$ a.s.\ is still much weaker
than the inequality $\var_{\psi}(\omega)<\infty$ a.s.,
with $\psi$ defined by (\ref{eq:Taylor-function}),
which we can prove assuming $\omega$ continuous
(see Proposition~\ref{prop:Taylor-all} below).
\begin{proof}[Proof of Proposition~\ref{prop:main}]
  Set $w(j):=2^{2j}\phi(2^{-j})$, $j=0,1,\ldots$;
  by (\ref{eq:phi}),
  $\sum_{j=0}^{\infty}w(j)<\infty$.
  Without loss of generality we will assume that $\sum_{j=0}^{\infty}w(j)=1$.

  Let $0=t_0\le t_1\le\cdots\le t_n=T$
  be a partition of the interval $[0,T]$;
  without loss of generality we replace all ``$\le$'' by ``$<$''.
  Fix $\omega\in\Omega$;
  at first we will be mostly interested in the case
  where $\sup_{t\in[0,T]}\omega(t)\le2^L$
  for a given positive integer $L$.
  Split
  $
    \sum_{i=1}^n
    \phi
    \left(\left|
      \omega(t_i) - \omega(t_{i-1})
    \right|\right)
  $
  into two parts:
  \[
    \sum_{i=1}^n
    \phi
    \left(\left|
      \omega(t_i) - \omega(t_{i-1})
    \right|\right)
    =
    \sum_{i\in I_{+}}
    \phi
    \left(
      \omega(t_i) - \omega(t_{i-1})
    \right)
    +
    \sum_{i\in I_{-}}
    \phi
    \left(
      \omega(t_{i-1}) - \omega(t_{i})
    \right),
  \]
  where
  \begin{align*}
    I_{+} &:= \{i \st \omega(t_i) - \omega(t_{i-1}) > 0\},\\
    I_{-} &:= \{i \st \omega(t_i) - \omega(t_{i-1}) < 0\}.
  \end{align*}

  By Lemma~\ref{lem:Doob},
  for each $j=0,1,\ldots$ and each $k\in\{0,\ldots,2^{L+j}-1\}$
  there exists a positive simple capital process $S^{j,k}$
  that starts from $k2^{-j}$ and satisfies
  \begin{equation}\label{eq:proof-1}
    S^{j,k}_T(\omega)
    \ge
    2^{-j}
    \MM_{k2^{-j}}^{(k+1)2^{-j}} (\omega).
  \end{equation}
  Summing $2^{-L-j}S^{j,k}$ over $k=0,\ldots,2^{L+j}-1$
  (in other words, averaging $S^{j,k}$),
  we obtain a positive 
  capital process $S^j$ such that
  \begin{align}
    S_0^j
    &=
    \sum_{k=0}^{2^{L+j}-1}
    k2^{-L-2j}
    \le
    2^{L-1},\notag\\
    S_{T}^j(\omega)
    &\ge
    2^{-L-2j}
    \MM(\omega,2^{-j})
    \text{ when $\sup\omega\le2^L$}.
    \label{eq:proof-2}
  \end{align}
  For each $i\in I_{+}$,
  let $j(i)$ be the smallest positive integer $j$ satisfying
  \begin{equation}\label{eq:j}
    \exists k\in\{0,1,2,\ldots\}:
    \omega(t_{i-1}) \le k2^{-j} \le (k+1)2^{-j} \le \omega(t_i).
  \end{equation}
  Summing $w(j)S^j$ over $j=0,1,\ldots$,
  we obtain a positive capital process $S$ such that
  $
    S_0
    \le
    2^{L-1}
  $
  and, when $\sup\omega\le2^L$,
  \begin{align}
    S_{T}(\omega)
    &\ge
    \sum_{j=0}^{\infty}
    w(j)
    2^{-L-2j}
    \MM(\omega,2^{-j})
    \ge
    \sum_{i\in I_{+}}
    w(j(i))
    2^{-L-2j(i)}\label{eq:to-explain}\\
    &=
    2^{-L}
    \sum_{i\in I_{+}}
    \phi
    \left(
      2^{-j(i)}
    \right) \notag\\
    &\ge
    \delta
    \sum_{i\in I_{+}}
    \phi
    \left(
      \omega(t_i)-\omega(t_{i-1})
    \right),\label{eq:end}
  \end{align}
  where $\delta>0$ depends only on $L$
  and the supremum in (\ref{eq:phi}).
  The second inequality in (\ref{eq:to-explain})
  follows from the fact that to each $i\in I_+$
  corresponds an upcrossing of an interval of the form
  $(k2^{-j(i)},(k+1)2^{-j(i)})$.

  An inequality analogous to the inequality
  between the second and the last terms of the chain (\ref{eq:to-explain})--(\ref{eq:end})
  can be proved for downcrossings instead of upcrossings,
  $I_-$ instead of $I_+$,
  and $\omega(t_{i-1})$ and $\omega(t_i)$ swapped around.
  Using this inequality (in the third ``$\ge$'' below) gives,
  when $\sup\omega\le2^L$,
  \begin{align}
    S_{T}(\omega)
    &\ge
    \sum_{j=0}^{\infty}
    w(j)
    2^{-L-2j}
    \MM(\omega,2^{-j})
    \label{eq:chain-start}\\
    &\ge
    \sum_{j=0}^{\infty}
    w(j)
    2^{-L-2j}
    \left(
      \DD(\omega,2^{-j}) - 2^{L+j}
    \right)
    \notag\\
    &\ge
    \delta
    \sum_{i\in I_{-}}
    \phi
    \left(
      \omega(t_{i-1})-\omega(t_{i})
    \right)
    -
    \sum_{j=0}^{\infty}
    w(j)2^{-j}
    \notag\\
    &\ge
    \delta
    \sum_{i\in I_{-}}
    \phi
    \left(
      \omega(t_{i-1})-\omega(t_{i})
    \right)
    -
    1.
    \label{eq:chain-end}
  \end{align}
  Averaging the two lower bounds for $S_T(\omega)$,
  we obtain, when $\sup\omega\le2^L$,
  \[
    S_T(\omega)
    \ge
    \frac{\delta}{2}
    \sum_{i=1}^n
    \phi
    \left(\left|
      \omega(t_{i})-\omega(t_{i-1})
    \right|\right)
    -
    \frac{1}{2}.
  \]
  Taking supremum over all partitions gives
  \begin{equation}\label{eq:inequality}
    \sup\omega\le2^L
    \Longrightarrow
    S_T(\omega)
    \ge
    \frac{\delta}{2}
    \var_{\phi}(\omega)
    -
    \frac{1}{2}.
  \end{equation}
  We can see that the event that $\sup\omega\le2^L$
  and $\var_{\phi}(\omega)=\infty$ is null.
  Since the union of countably many null events is always null,
  the event that $\var_{\phi}(\omega)=\infty$ is also null.
\end{proof}

The case of c\`adl\`ag price paths considered in this section
is very different from the case of continuous price paths
that we take up in the following section.
Proposition~\ref{prop:vi-continuous} will show that,
in the latter case,
$\vi(\omega)\in\{0,2\}$ a.s.
In the former case,
no c\`adl\`ag price path that is bounded away from zero
and has finite total variation can belong to a null event,
as the following proposition will show.

The \emph{upper probability} of a set $E\subseteq\Omega$
is defined as
\begin{equation}\label{eq:upper-probability}
  \UpProb(E)
  :=
  \inf
  \bigl\{
    S_0
    \bigm|
    \forall\omega\in\Omega:
    S_T(\omega)
    \ge
    \III_E(\omega)
  \bigr\},
\end{equation}
where $S$ ranges over the positive capital processes
and $\III_E$ stands for the indicator of $E$.
In this section we will be interested only in one-element sets $E$.
We write $\var(f)$ meaning $\var_1(f)$.
\begin{proposition}\label{prop:bounded-variation}
  For any $\omega\in\Omega$,
  \begin{equation}\label{eq:bounded-variation}
    \UpProb(\{\omega\})
    =
    \sqrt{\frac{\omega(0)}{\omega(T)}e^{-\var(\ln\omega)}}.
  \end{equation}
\end{proposition}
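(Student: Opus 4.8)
The plan is to prove \eqref{eq:bounded-variation} by two matching inequalities. Writing $V:=\var(\ln\omega)$ and splitting it as $V=U+D$ into the up- and down-variation of $\ln\omega$ (so that $\ln\omega(T)-\ln\omega(0)=U-D$), the right-hand side of \eqref{eq:bounded-variation} equals $e^{-U}$: inside the square root one has $\frac{\omega(0)}{\omega(T)}e^{-V}=e^{-(U-D)}e^{-(U+D)}=e^{-2U}$. Thus the content of the proposition is that the cheapest positive capital process reaching $1$ on the single path $\omega$ captures exactly the upward movement of $\ln\omega$. I would prove $\UpProb(\{\omega\})\ge e^{-U}$ by a maximal inequality valid for every positive capital process, and $\UpProb(\{\omega\})\le e^{-U}$ by exhibiting a strategy that is fully invested during the up-moves of $\omega$ and idle otherwise.

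For the lower bound, first I would prove the pointwise estimate
\[
  S_T(\eta)
  \le
  S_0
  \sqrt{\frac{\eta(T)}{\eta(0)}}
  \,e^{\var(\ln\eta)/2}
  \qquad(\eta\in\Omega)
\]
for every positive capital process $S$; its reciprocal is exactly the claimed value. By \eqref{eq:positive-capital} it suffices to prove this for a single positive simple capital process $\K^{G}$ and then sum. On the segment $[\tau_n,\tau_{n+1}]$ the position $h_n$ is held constant, and positivity of $\K^{G}$ on all of $\Omega$ forces $0\le h_n(\eta)\le \K^{G}_{\tau_n}(\eta)/\eta(\tau_n)$: among the paths agreeing with $\eta$ up to $\tau_n$ there are ones whose price immediately afterwards is arbitrarily large (forcing $h_n\ge0$) or arbitrarily close to $0$ (forcing $h_n\le \K^{G}_{\tau_n}/\eta(\tau_n)$). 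These constraints give the one-step bound $\K^{G}_{\tau_{n+1}}/\K^{G}_{\tau_n}\le\sqrt{\eta(\tau_{n+1})/\eta(\tau_n)}\,e^{|\Delta_n|/2}$, where $\Delta_n:=\ln\eta(\tau_{n+1})-\ln\eta(\tau_n)$: for an up-step the factor is $\eta(\tau_{n+1})/\eta(\tau_n)$, attained at $h_n=\K^{G}_{\tau_n}/\eta(\tau_n)$, and for a down-step it is $1$, attained at $h_n=0$. Telescoping over the finitely many segments (including $[0,\tau_1]$ and the final segment up to $T$) makes the square roots collapse to $\sqrt{\eta(T)/\eta(0)}$ while the exponents add up to $\tfrac12\sum_n|\Delta_n|\le\tfrac12\var(\ln\eta)$. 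Applying the estimate at $\eta=\omega$ to any $S$ with $S_T(\omega)\ge1$ then yields $S_0\ge\sqrt{\omega(0)/\omega(T)}\,e^{-V/2}$, the desired lower bound.

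For the upper bound, fix $\epsilon>0$ and choose a partition $0=s_0<\dots<s_m=T$ with $\sum_i(\ln\omega(s_i)-\ln\omega(s_{i-1}))^+\ge U-\epsilon$. I would run the simple strategy that, on each segment $[s_i,s_{i+1}]$ on which $\ln\omega$ increases, invests all current capital in the security (position $h_i=S_{s_i}/\omega(s_i)$), and otherwise holds cash; the strategy also liquidates permanently once its capital reaches $1$ and refrains from trading whenever $\omega(s_i)$ is below a threshold $\theta\in(0,\inf_t\omega(t))$. The liquidation and threshold keep each $h_i$ bounded by $1/\theta$ (so $G$ is admissible) without affecting the run along $\omega$, since $\omega$ stays above $\theta$; on the path $\omega$ the capital is multiplied by $\omega(s_{i+1})/\omega(s_i)$ on every up-segment, so starting from $S_0=e^{-(U-\epsilon)}$ it reaches at least $e^{-(U-\epsilon)}e^{\,U-\epsilon}=1$. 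Being either fully invested or in cash, the capital process is positive on all of $\Omega$. Hence $\UpProb(\{\omega\})\le e^{-(U-\epsilon)}$, and letting $\epsilon\downarrow0$ gives $\UpProb(\{\omega\})\le e^{-U}$.

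The main obstacle is the rigorous justification of the per-segment constraints $0\le h_n\le \K^{G}_{\tau_n}/\eta(\tau_n)$ in the lower bound: one must check that the required ``continuation paths'' (agreeing with $\eta$ up to $\tau_n$, then moving the price to an extreme value before the next stopping time can fire) really exist in $D^+[0,T]$ and carry the same values of $\tau_n$, $h_n$ and $\K^{G}_{\tau_n}$, which calls for some care with the stopping-time structure and with right-continuity. The remaining technical points—boundedness of the positions in the construction, and the degenerate cases where $\omega$ touches $0$ or has $\var(\ln\omega)=\infty$ (for which both sides of \eqref{eq:bounded-variation} vanish)—are comparatively routine.
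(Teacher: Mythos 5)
Your proposal is correct and follows essentially the same route as the paper: both rest on the observation that positivity of a simple capital process on all of $D^+[0,T]$ forces the constraint $0\le h_n\le\K_{\tau_n}/\omega(\tau_n)$ (since the price can jump arbitrarily high or drop to zero right after $\tau_n$), from which the per-step bound $\K_{\tau_{n+1}}\le(1\vee\frac{\omega(\tau_{n+1})}{\omega(\tau_n)})\K_{\tau_n}$ and the attainability by the ``fully invested on up-moves'' strategy follow. The only difference is cosmetic bookkeeping: the paper first establishes $\UpProb(\{\omega\})=e^{-\var^+(\ln\omega)}$ and then applies the identity $\var^+(f)=(\var(f)+f(T)-f(0))/2$, whereas you fold that identity into the telescoped product from the start; the subtlety you flag about continuation paths preserving $\tau_n$, $h_n$, and $\K_{\tau_n}$ is real but is passed over just as lightly in the paper's own proof.
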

\begin{proof}
  Fix $\omega\in\Omega$.
  Let $S$ be any positive capital process.
  Represent it in the form (\ref{eq:positive-capital}).
  It suffices to prove that none of the component strategies $G_m$
  can increase its initial capital $c_m$
  by more than a factor of
  \[
    \sqrt{\frac{\omega(T)}{\omega(0)}e^{\var(\ln\omega)}}
  \]
  and that this factor itself is attainable,
  at least in the limit.
  Fix an $m$ and let $c=c_m$, $\tau_1,\tau_2,\ldots$, and $h_1,h_2,\ldots$
  be the component initial capital, stopping times, and positions of $G_m$.
  It is clear that all $h_n$ must be positive
  in order for $\K:=\K^{G_m}$ to be positive:
  upward price movements are unbounded.
  Downward price movements right after $\tau_n$
  can be as large as $\omega(\tau_n)$,
  which implies that
  \begin{equation}\label{eq:super-prudent}
    0 \le h_n \le \K_{\tau_n} / \omega(\tau_n)
  \end{equation}
  (this condition will be further discussed and justified in Section~\ref{sec:no-borrowing}).
  This gives, according to (\ref{eq:simple-capital}),
  \[
    \K_{\tau_{n+1}}
    =
    \K_{\tau_n}
    +
    h_n
    \left(
      \omega(\tau_{n+1})
      -
      \omega(\tau_n)
    \right)
    \le
    \left(
      1
      \vee
      \frac
      {\omega(\tau_{n+1})}
      {\omega(\tau_n)}
    \right)
    \K_{\tau_n}.
  \]
  The last ``$\le$'' becomes ``$=$'' when
  \[
    h_n
    :=
    \begin{cases}
      \K_{\tau_n} / \omega(\tau_n) & \text{if $\omega(\tau_{n+1})>\omega(\tau_n)$}\\
      0 & \text{otherwise}.
    \end{cases}
  \]
  We can see that no positive simple capital process increases its initial capital
  by more than a factor of $e^{\var^+(\ln\omega)}$,
  where $\var^+(f)$ is defined by the following modification of (\ref{eq:var}):
  \begin{equation*}
    \var^+(f)
    :=
    \sup_\kappa
    \sum_{i=1}^{n}
    \left(
      f(t_{i})-f(t_{i-1})
    \right)^+;
  \end{equation*}
  as usual, $u^+$ and $u^-$ are defined to be $0\vee u$ and $0\vee(-u)$, respectively.
  On the other hand,
  for each $\epsilon>0$,
  there is a positive simple capital process that increases its initial capital
  by a factor of at least $(1-\epsilon)e^{\var^+(\ln\omega)}$.
  We can see that
  \begin{equation}\label{eq:almost}
    \UpProb(\{\omega\})
    =
    e^{-\var^+(\ln\omega)}.
  \end{equation}
  If we define
  \begin{equation*}
    \var^-(f)
    :=
    \sup_\kappa
    \sum_{i=1}^{n}
    \left(
      f(t_{i})-f(t_{i-1})
    \right)^-,
  \end{equation*}
  we can further see that $\var(f)=\var^+(f)+\var^-(f)$
  and $f(T)-f(0)=\var^+(f)-\var^-(f)$;
  the last two equalities imply $\var^+(f)=(\var(f)+f(T)-f(0))/2$.
  In combination with (\ref{eq:almost}),
  this gives (\ref{eq:bounded-variation}).
\end{proof}

\ifFULL\bluebegin
  \subsection*{Discrete-time case}

  It would be interesting to obtain
  a discrete-time analogue of Proposition~\ref{prop:main}.
  Using backward induction, we can obtain precise results.
  The result will be about the $p$-variation of $\omega$, not $\log\omega$:
  in the latter case, Market can easily beat Trader
  (by choosing a sharply decreasing price path,
  except for busting Trader when he chooses a strictly negative stake).
\blueend\fi

\section{Volatility of continuous price paths}
\label{sec:continuous}

In this section we consider a new sample space:
$\Omega$ is now the set $C^+[0,T]$ of all positive continuous functions
$\omega:[0,T]\to[0,\infty)$.
Intuitively,
this is the set of all possible price paths of our security.
For each $t\in[0,T]$,
the $\sigma$-algebra $\FFF'_t$ on $C^+[0,T]$ is the trace of $\FFF_t$ on $C^+[0,T]$
(i.e., $\FFF'_t$ consists of the sets $E\cap C^+[0,T]$ with $E\in\FFF_t$);
we will omit the prime in $\FFF'_t$.
A \emph{process} $S$ is a family of functions
$S_t:C^+[0,T]\to[-\infty,\infty]$, $t\in[0,T]$,
such that each $S_t$ is $\FFF_t$-measurable.
A \emph{simple capital process} is defined to be
the restriction of a simple capital process in the old sense to $C^+[0,T]$
(i.e., $S'$ is called a simple capital process
if there is a simple capital process $S$ in the old sense
such that, for each $t\in[0,T]$, $S'_t=S_t|_{C^+[0,T]}$).
\emph{Positive capital processes} are capital processes $S$
that can be represented in the form (\ref{eq:positive-capital}),
where the simple capital processes $\K^{G_m}_t(\omega)$
are required to be positive,
for all $t\in[0,T]$ and $\omega\in C^+[0,T]$,
and the positive series $\sum_{m=1}^{\infty}c_m$ is required to converge,
where $c_m$ is the initial capital of $G_m$.
(The notion of simple trading strategies does not change,
but we are only interested in their behaviour on $\omega\in C^+[0,T]$.)
An \emph{event} is an element of the $\sigma$-algebra $\FFF_T$ on $C^+[0,T]$.
The definition of a null event is the same as before
(but using the new notion of a positive capital process),
and the adjective ``typical'' will again be used to refer to the complements
of null events.

\begin{remark}
  The definitions used in \cite{\CTIV} are slightly different,
  but all proofs there also work under our current definitions.
  Under the assumption of continuity of $\omega$,
  the requirement that $\omega$ should be positive is superfluous,
  and is never made in \cite{\CTIV}.
\end{remark}

\ifFULL\bluebegin
  To do:
  Show that any capital process in the sense of \cite{\CTIV}
  can be extended to a capital process
  in the sense of Section~\ref{sec:cadlag}.
  This was done in my draft written in Moscow (in August 2010),
  but it was lost in September 2010.
  The alternative used in the above remark:
  show that all proofs in \cite{\CTIV} work under the definitions
  of this section.
\blueend\fi

The following elaboration of Theorem \ref{thm:main} for continuous price paths
was established in \cite{\CTII} using direct arguments
(relying on the result in \cite{Bruneau:1979} mentioned earlier
for the inequality $\vi(\omega)\le2$
and a standard argument,
going back to \cite{Kolmogorov:1929LLN}
and used in the context of mathematical finance in \cite{Shiryaev:1999},
Example~3 on p.~658,
for the inequality $\vi(\omega)\ge2$ for non-constant $\omega$).

\begin{proposition}[\cite{\CTII}, Theorem~1]\label{prop:vi-continuous}
  For typical $\omega\in C^+[0,T]$,
  \begin{equation*} 
    \vi(\omega)=2 \text{ or $\omega$ is constant}.
  \end{equation*}
\end{proposition}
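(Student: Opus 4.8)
The statement splits into $\vi(\omega)\le2$ and, for non-constant $\omega$, $\vi(\omega)\ge2$, both holding for typical $\omega$. The upper bound is inherited for free: every continuous positive function is c\`adl\`ag, so $C^{+}[0,T]\subseteq D^{+}[0,T]$, and the positive capital process produced in the proof of Theorem~\ref{thm:main} restricts to a positive capital process on $C^{+}[0,T]$ in the sense of this section. Hence the event $\{\vi(\omega)>2\}$ is null here as well, and the work is entirely in the opposite inequality.

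I would show that $E:=\{\omega\in C^{+}[0,T]:\omega\text{ non-constant and }\vi(\omega)<2\}$ is null. If $\vi(\omega)<2$ then $\var_{p}(\omega)\le\beta$ for some rationals $p<2$, $\beta>0$, and non-constancy gives $\sup_{t}\lvert\omega(t)-\omega(0)\rvert\ge\alpha$ for some rational $\alpha>0$; thus $E$ is a countable union of events $E_{p,\alpha,\beta}$ and it suffices to null each. On $E_{p,\alpha,\beta}$, for a scale $\delta>0$ put $\sigma_{0}:=0$ and $\sigma_{k+1}:=\inf\{t>\sigma_{k}:\lvert\omega(t)-\omega(\sigma_{k})\rvert\ge\delta\}$, run until $\omega$ first deviates from $\omega(0)$ by $\alpha$, and set $s_{k}:=\mathrm{sign}(\omega(\sigma_{k+1})-\omega(\sigma_{k}))$. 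By continuity each completed move has magnitude exactly $\delta$, so with $N_{\delta}$ moves the discrete quadratic variation is $N_{\delta}\delta^{2}=\delta^{2-p}\sum_{k}\lvert\omega(\sigma_{k+1})-\omega(\sigma_{k})\rvert^{p}\le\beta\,\delta^{2-p}\to0$, while the net displacement $D_{\delta}:=\sum_{k}s_{k}$ satisfies $\lvert D_{\delta}\rvert\ge\alpha/\delta$.

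The strategy is a game-theoretic reading of Kolmogorov's argument: a Kelly bettor on the coin-tossing sequence $(s_{k})$. For fixed $\delta$ and $\lambda\in(0,1)$, hold on $[\sigma_{k},\sigma_{k+1})$ a number of shares equal to $\lambda$ times the current capital divided by $\delta$; the capital then multiplies by $1+\lambda s_{k}$ at each move. Continuity --- the sole place it enters --- forces $\lvert\omega(t)-\omega(\sigma_{k})\rvert\le\delta$ before $\sigma_{k+1}$, so the capital stays above $(1-\lambda)$ times its value at $\sigma_{k}$ and remains positive. Averaging the strategies built from $+\lambda$ and $-\lambda$ gives a positive capital process of unit initial value whose terminal value is at least $\frac12(1-\lambda^{2})^{N_{\delta}/2}\bigl(\frac{1+\lambda}{1-\lambda}\bigr)^{\lvert D_{\delta}\rvert/2}$, i.e.\ has logarithm at least $-\frac12N_{\delta}\lambda^{2}+\lvert D_{\delta}\rvert\lambda$ to leading order in $\lambda$; choosing $\lambda=\lvert D_{\delta}\rvert/N_{\delta}$ makes this of order $\lvert D_{\delta}\rvert^{2}/(2N_{\delta})\ge\alpha^{2}/(2N_{\delta}\delta^{2})\to\infty$ on $E_{p,\alpha,\beta}$.

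The main obstacle is that the winning leverage $\lvert D_{\delta}\rvert/N_{\delta}$ depends on the realized path, whereas a trading strategy must be fixed in advance. I would remove this dependence by the usual mixture (universal-portfolio) device: superpose, with fixed summable weights, the above strategies over scales $\delta_{j}\downarrow0$ and over a countable grid of fractions in $(0,1)$. Admissibility of the grid is automatic because $\lvert D_{\delta}\rvert\le N_{\delta}$ forces the optimal $\lambda$ into $(0,1]$, and the logarithmic payoff is quadratic and hence flat enough near its maximum for a fixed grid to capture the blow-up; on $E_{p,\alpha,\beta}$ one term then carries infinite capital with fixed positive weight. This is precisely the step with no c\`adl\`ag analogue: by Proposition~\ref{prop:bounded-variation} a non-constant path of finite variation bounded away from $0$ has strictly positive upper probability in $D^{+}[0,T]$, so continuity is indispensable here.
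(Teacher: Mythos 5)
Your proposal is correct and follows essentially the route the paper itself indicates: it does not reprove Proposition~\ref{prop:vi-continuous} but cites \cite{\CTII}, describing the proof there as combining the Bruneau--Stricker upcrossing bound for $\vi(\omega)\le2$ (which you correctly inherit by restricting the capital process of Theorem~\ref{thm:main} to $C^+[0,T]$) with Kolmogorov's argument, i.e.\ the Kelly/exponential-martingale bet on the $\pm\delta$ ladder mixed over scales and leverages, for $\vi(\omega)\ge2$ on non-constant paths. Your sketch, including the countable decomposition into $E_{p,\alpha,\beta}$ and the closing remark that Proposition~\ref{prop:bounded-variation} rules out a c\`adl\`ag analogue, matches that argument.
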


This proposition is similar to the well-known property of continuous semimartingales
(Lepingle \cite{Lepingle:1976}, Theorem 1(a) and Proposition 3(b)).
Related results in mathematical finance usually make strong stochastic assumptions
(such as those in \cite{Rogers:1997local}).
A probability-free result related to the inequality $\vi(\omega)\ge2$
(for typical non-constant $\omega$)
was established by Salopek \cite{Salopek:1998} (p.~228),
who proved that the trader can start from $0$ and end up with a strictly positive capital
in a market with two securities whose price paths $\omega_1$ and $\omega_2$
are strictly positive, continuous, and satisfy $\vi(\omega_1)<2$, $\vi(\omega_2)<2$,
$\omega_1(0)=\omega_2(0)=1$, and $\omega_1(T)\ne\omega_2(T)$.
However, Salopek's definition of a capital process only works
under the assumption that all securities in the market
have price paths $\omega$ satisfying $\vi(\omega)<2$.
The proof of Salopek's result was simplified in \cite{Norvaisa:2000-full}
(using the argument from \cite{Shiryaev:1999} mentioned earlier).

\ifFULL\bluebegin
  Intuitively,
  Proposition~\ref{prop:vi-continuous} seems to suggest
  that volatility is created by the process of trading itself,
  and not, for example, only by news.

  \begin{question}
    Can anything similar to Proposition~\ref{prop:vi-continuous}
    be said in the case of c\`adl\`ag price paths,
    as in the previous section?
    (From Proposition~\ref{prop:bounded-variation}
    we can see that Proposition~\ref{prop:vi-continuous} itself
    cannot be asserted a.s.\ in this case.)
    A more specific question is as follows.
    Let the sample space be the set $\Omega:=D^+[0,T]$
    of positive c\`adl\`ag functions on $[0,T]$.
    Do there exist numbers $1<a<b<2$ such that the set
    of continuous $\omega\in\Omega$ satisfying $a<\vi(\omega)<b$ is null?
    A natural assumption is that the answer is negative
    (but this needs to be proved).
  \end{question}

  A potentially relevant result
  is that of Bretagnolle \cite{Bretagnolle:1972} and Monroe \cite{Monroe:1972}
  about L\'evy processes without the Brownian component:
  for $p\in[1,2)$,
  $\var_p(\omega)<\infty$ a.s.\ if and only if
  $\int_{\lvert x\rvert\le1}\lvert x\rvert^p L(\dd x) < \infty$,
  where $L$ is the L\'evy measure.
  (Cited by Lepingle \cite{Lepingle:1976}, p.~296.)
  However, because of the requirement $\omega\ge0$,
  such a L\'evy process is a subordinator with a linear drift,
  and so its L\'evy measure must have support in $[0,\infty)$ and satisfy
  $\int_{x\le1}x L(\dd x) < \infty$.
  This difficulty can be overcome if we consider $\exp(\text{such a process})$.
\blueend\fi

The paper \cite{\CTIV} establishes connections between continuous price paths
and Brownian motion,
which in combination with Taylor's \cite{Taylor:1972} results
greatly refine Proposition~\ref{prop:vi-continuous}.
Let $\psi:[0,\infty)\to[0,\infty)$ be Taylor's \cite{Taylor:1972} function
\begin{equation}\label{eq:Taylor-function}
  \psi(u)
  :=
  \frac{u^2}{2\ln^*\ln^*u},
\end{equation}
with $\psi(0):=0$ and $\ln^*u:=1\vee\left|\ln u\right|$.

\begin{proposition}[\cite{\CTIV}, Corollary~5]\label{prop:Taylor-all}
  For typical $\omega\in C^+[0,T]$,
  \begin{equation*} 
    \var_{\psi}(\omega)<\infty.
  \end{equation*}
  Suppose $\phi:[0,\infty)\to[0,\infty)$ is such that $\psi(u)=o(\phi(u))$ as $u\to0$.
  For typical $\omega\in C^+[0,T]$,
  \begin{equation*} 
    \var_{\phi}(\omega)=\infty \text{ or $\omega$ is constant}.
  \end{equation*}
\end{proposition}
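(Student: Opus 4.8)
The plan is to deduce both assertions from Taylor's \cite{taylor:1972} exact-variation theorem for Brownian motion, transported into the present framework by the correspondence between typical continuous price paths and the Wiener measure $\Wiener$ that is developed in \cite{\CTIV}. Recall that Taylor proved, for a standard Brownian motion $B$ and the gauge $\psi$ of (\ref{eq:Taylor-function}), that $\var_{\psi}(B)<\infty$ almost surely while $\var_{\phi}(B)=\infty$ almost surely whenever $\psi(u)=o(\phi(u))$ as $u\to0$; so the whole content of the proposition is already present on the Wiener side, and the task is to move these two almost-sure statements to the notion of ``typical''. It is worth noting at the outset that the crude upcrossing-averaging device behind Proposition~\ref{prop:main} cannot reach the \emph{exact} gauge $\psi$ (the surrounding discussion makes this explicit), so the Brownian input is genuinely needed rather than being a convenience.

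First I would separate the constant paths, which are excluded from the second assertion and trivially have zero variation in the first, and then reduce a non-constant typical $\omega$ to Brownian motion by a time change. For typical $\omega\in C^+[0,T]$ the quadratic variation exists as a continuous increasing function $A_t$, and, writing $A^{-1}$ for its right-continuous inverse, the time-changed path $s\mapsto\omega(A^{-1}(s))$ is governed by $\Wiener$ in the precise probability-free sense of \cite{\CTIV}. The key elementary observation is that both $\var_{\psi}$ and $\var_{\phi}$ are invariant under a continuous increasing time change: partitions of $[0,T]$ and of $[0,A_T]$ correspond under $A$, and the increments $\omega(t_i)-\omega(t_{i-1})$ are literally unchanged, so $\var_{\psi}(\omega)=\var_{\psi}(\omega\circ A^{-1})$ and likewise for $\phi$. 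Thus the $\psi$- and $\phi$-variation of $\omega$ equal those of its Brownian avatar.

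Granting the identity $\UpProb(E)=\Wiener(E)$ for the time-change-invariant events $E$ at issue --- this is the substance of \cite{\CTIV} --- both halves follow at once. For the first, the event that $\var_{\psi}(\omega)=\infty$ corresponds to the event that $\var_{\psi}(B)=\infty$, which is $\Wiener$-null by Taylor; hence it has upper probability $0$, i.e.\ it is null, i.e.\ $\var_{\psi}(\omega)<\infty$ for typical $\omega$. For the second, the event that $\var_{\phi}(\omega)<\infty$ while $\omega$ is non-constant corresponds to the event that $\var_{\phi}(B)<\infty$ on a non-degenerate path, which is again $\Wiener$-null once $\psi(u)=o(\phi(u))$; hence it too is null, which is exactly the statement that $\var_{\phi}(\omega)=\infty$ or $\omega$ is constant for typical $\omega$. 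Either half also sharpens Proposition~\ref{prop:vi-continuous}, as the text advertises.

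The hard part will not be this transfer but the correspondence on which it rests: the probability-free Dubins--Schwarz theorem that realizes typical continuous price paths as time-changed Brownian motion and yields $\UpProb=\Wiener$ on the appropriate $\sigma$-algebra. Setting this up rigorously is the real work of \cite{\CTIV}; it requires constructing the quadratic variation $A$ and its inverse game-theoretically, verifying that the events concerning $\var_{\psi}(\omega)$ and $\var_{\phi}(\omega)$ are $\FFF_T$-measurable, and accommodating the positivity constraint $\omega\ge0$. The last point I expect to be the subtlest: positivity is a global restriction, whereas $\psi$- and $\phi$-variation probe only the fine-scale fluctuation structure that Brownian motion captures, so one would localize and stop the embedding before $\omega$ can approach $0$ and then patch the pieces together. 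Once that machinery is in place, Taylor's theorem supplies the rest.
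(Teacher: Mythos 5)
Your proposal matches the paper's own route: Proposition~\ref{prop:Taylor-all} is not proved in this paper but imported from \cite{\CTIV} (Corollary~5), where it is obtained exactly as you describe --- by combining the probability-free Dubins--Schwarz correspondence between typical continuous price paths and Brownian motion with Taylor's exact variation theorem, the time-change invariance of $\var_{\psi}$ and $\var_{\phi}$ doing the transfer. Your assessment that the real work lies in establishing that correspondence (and that the upcrossing-averaging method of Proposition~\ref{prop:main} cannot reach the exact gauge $\psi$) agrees with the surrounding discussion in the text.
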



\begin{question}
  Can Proposition~\ref{prop:Taylor-all}
  be partially extended to positive c\`adl\`ag functions
  to say that $\var_{\psi}(\omega)<\infty$ for typical $\omega\in D^+[0,T]$?
\end{question}

Proposition~\ref{prop:Taylor-all} refines Proposition~\ref{prop:vi-continuous},
but is further strengthened by the next result, Proposition~\ref{prop:Taylor}.
The following quantity was introduced by Taylor \cite{Taylor:1972}:
for $f:[0,T]\to\bbbr$, set
\begin{equation*} 
  \qvar(f)
  :=
  \lim_{\delta\to0}
  \sup_{\kappa\in K_{\delta}}
  \sum_{i=1}^{n}
  \psi
  \left(
    \left|
      f(t_i)
      -
      f(t_{i-1})
    \right|
  \right),
\end{equation*}
where $K_{\delta}$ is the set of all partitions $0=t_0\le\cdots\le t_{n}=T$
of $[0,T]$ whose mesh is less than $\delta$:
$\max_i(t_i-t_{i-1})<\delta$.
Notice that $\qvar(\omega)\le\var_{\psi}(\omega)$.

\ifFULL\bluebegin
  We could define $\qvar(\omega)$ in terms of refinements,
  as local $p$-variation in \cite{Dudley/Norvaisa:2011}, p.~142.
  However, this would not have made any difference:
  see Lemma~3.62 in \cite{Dudley/Norvaisa:2011}, p.~147
  (this lemma treats $u^p$ rather than $\psi$,
  but its argument seems to be general).
\blueend\fi

\begin{proposition}[\cite{\CTIV}, Corollary~6]\label{prop:Taylor}
  For typical $\omega\in C^+[0,T]$,
  \begin{equation*} 
    \qvar(\omega)\in(0,\infty) \text{ or $\omega$ is constant}.
  \end{equation*}
\end{proposition}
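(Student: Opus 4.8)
The statement splits into a finiteness claim and a positivity claim, and only the latter needs real work. Finiteness is immediate: by the remark just before the proposition, $\qvar(\omega)\le\var_{\psi}(\omega)$, and the first part of Proposition~\ref{prop:Taylor-all} gives $\var_{\psi}(\omega)<\infty$ for typical $\omega$. So the plan reduces to showing that the event that $\qvar(\omega)=0$ while $\omega$ is non-constant is null, i.e.\ that $\qvar(\omega)>0$ for typical non-constant $\omega$.

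For the positivity I would exploit the correspondence between typical continuous price paths and Brownian motion developed in \cite{\CTIV}, combined with Taylor's \cite{taylor:1972} exact-variation theorem. First I would introduce the quadratic variation $A_t$ of $\omega$, the game-theoretic limit of $\sum_i(\omega(t_i\wedge t)-\omega(t_{i-1}\wedge t))^2$ along a fixed refining sequence of partitions; for typical $\omega$ this limit exists and is continuous and increasing, and $\omega$ is a time-changed Brownian motion with clock $A$, informally $\omega_t-\omega_0=W_{A_t}$ for a game-theoretic Brownian motion $W$. Taylor's theorem asserts that the fine-mesh $\psi$-variation of a Brownian path over an interval equals its length almost surely; transferring this through the clock $A$ gives $\qvar(\omega)=A_T$ for typical $\omega$.

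Granting the identity $\qvar(\omega)=A_T$, the proposition follows, because $A_T>0$ precisely when $\omega$ is non-constant: if $A_T=0$ then, $A$ being increasing from $A_0=0$, the clock vanishes identically and $\omega\equiv\omega_0$, whereas the rough-path lower bound behind Proposition~\ref{prop:vi-continuous} (the $\vi(\omega)\ge2$ direction, realised by the exponential capital processes $\exp(\pm\lambda(\omega_t-\omega_0)-\tfrac{\lambda^2}{2}A_t)$) forces $A_T>0$ for typical non-constant $\omega$. The main obstacle is the transfer step: establishing the game-theoretic Dubins--Schwarz representation and, above all, pushing Taylor's almost-sure theorem through the random time change. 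Since $\psi$ differs from $u^2$ by the iterated-logarithm correction, the clock does not act on the $\psi$-variation as transparently as on the plain quadratic variation, and controlling this interaction uniformly over the fine partitions in the definition of $\qvar$ is the technical heart of the argument --- the part carried out in detail in \cite{\CTIV}.
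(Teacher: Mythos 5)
The paper does not actually prove this proposition: it is imported verbatim as Corollary~6 of \cite{\CTIV}, so there is no in-paper argument to compare yours against. That said, your reduction of the finiteness half is correct and uses exactly what the surrounding text supplies: the inequality $\qvar(\omega)\le\var_{\psi}(\omega)$ is noted immediately before the statement, and the first half of Proposition~\ref{prop:Taylor-all} then makes the event $\qvar(\omega)=\infty$ null, so only the positivity claim for non-constant paths remains.

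For that remaining claim, your sketch correctly names the route taken in \cite{\CTIV} --- a game-theoretic quadratic variation $A$, a Dubins--Schwarz-type representation of typical $\omega$ as Brownian motion run on the clock $A$, Taylor's exact $\psi$-variation theorem, and the resulting identity $\qvar(\omega)=A_T$ --- but every load-bearing step is asserted rather than established: the existence and continuity of $A$ for typical $\omega$, the time-change representation itself, the transfer of Taylor's almost-sure statement through the random clock (which, as you yourself note, is delicate because $\psi$ is not homogeneous), and the fact that $A_T>0$ for typical non-constant $\omega$. As written this is an accurate roadmap of the cited reference, not a proof; within the present paper the honest move is the one the author makes, namely to cite \cite{\CTIV}. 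If you want to make any piece self-contained with this paper's toolkit, the most accessible one is the final step (positivity of $A_T$ for typical non-constant $\omega$, via the exponential capital processes underlying the $\vi(\omega)\ge2$ half of Proposition~\ref{prop:vi-continuous}); the Dubins--Schwarz machinery and the Taylor transfer lie genuinely outside it.
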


\section{The case of no borrowing}
\label{sec:no-borrowing}

The definitions in this section are applicable
both to the framework of Section~\ref{sec:cadlag}
(where the sample space is the set $\Omega:=D^{+}[0,T]$
of all positive c\`adl\`ag functions on $[0,T]$)
and to the framework of Section~\ref{sec:continuous}
(where the sample space is the set $\Omega:=C^+[0,T]$
of all positive continuous functions on $[0,T]$).
In this paper, we only use positive capital processes $S_t$.
However, even positive capital processes
may involve borrowing cash or security:
at each time, $S_t$ is the price of a portfolio
containing some amounts of security and cash;
the total value of the portfolio is positive
but nothing prevents either of its components to be strictly negative.
In this section we consider a market
where the trader is allowed to borrow neither cash nor security
(borrowing security is essentially the same thing as short-selling in this context).
Such markets have been considered by, e.g.,
Cover \cite{Cover:1991} and Koolen and de Rooij \cite{Koolen/deRooij:2010local}.

Let $G$ be a simple trading strategy.
As before, the components of $G$ will be denoted
$c$ (the initial capital),
$\tau_n$ (the stopping times), and $h_n$ (the positions),
and we imagine a trader who follows $G$.
For $t\in[0,T]$ and $\omega\in D^+[0,T]$ or $\omega\in C^+[0,T]$ (as appropriate),
set $h_t(\omega):=h_n(\omega)$,
where $n$ is the unique number satisfying $t\in(\tau_n,\tau_{n+1}]$
(with $h_t(\omega):=0$ if $t\le\tau_1(\omega)$);
intuitively,
$h_t$ is the trader's position at time $t$.
The amount of \label{p:cash}\emph{cash} in the trader's portfolio at time $t$
is defined to be
$\K^{G}_t(\omega) - h_t(\omega) \omega(t)$.
Let us say that the trading strategy $G$ is \emph{borrowing-free}
if, for all $\omega$ and $t$,
we have $h_t(\omega) \ge 0$
(the condition of \emph{no borrowing security})
and $\K^{G}_t(\omega) - h_t(\omega) \omega(t) \ge 0$
(the condition of \emph{no borrowing cash}).
(Remember that being borrowing-free
is a completely different requirement from being self-financing:
all trading strategies considered in this paper are self-financing.)

It is easy to see that $G$ is borrowing-free
if and only if $c\ge0$ and (\ref{eq:super-prudent})
(with $\K$ understood to be $\K^G$)
is satisfied for all $n\in\{1,2,\ldots\}$.
Indeed, suppose the latter condition is satisfied.
If $t\in[0,\tau_1(\omega)]$,
\[
  \K^G_t(\omega)
  -
  h_t(\omega)\omega(t)
  =
  c
  \ge
  0.
\]
And if $t\in(\tau_n(\omega),\tau_{n+1}(\omega)]$,
\begin{align*}
  \K^G_t(\omega)
  -
  h_t(\omega)\omega(t)
  &=
  \K^G_{\tau_n}(\omega)
  +
  h_n(\omega) (\omega(t)-\omega(\tau_n))
  -
  h_n(\omega)\omega(t) \\
  &=
  \K^G_{\tau_n}(\omega)
  -
  h_n(\omega)\omega(\tau_n)
  \ge
  0.
\end{align*}

In the framework of Section~\ref{sec:cadlag},
where the sample space is $D^+[0,T]$,
all trading strategies $G$ for which $\K^G$ is positive are automatically borrowing-free
(we already used this fact in the proof of Proposition~\ref{prop:bounded-variation}).
Indeed, let $\K^G$ be positive.
If the condition of no borrowing security is violated and $h_t(\omega)<0$,
we can make $\K_t^{G}(\omega)<0$ by modifying $\omega$ over $[t,T]$
and making $\omega(t)$ sufficiently large.
(Intuitively, borrowing security is risky when its price can jump
since there is no upper limit on the price.)
If the condition of no borrowing cash is violated
and $\K^{G}_t(\omega) - h_t(\omega) \omega(t) < 0$,
we can make $\K_t^{G}(\omega)<0$ by modifying $\omega$ over $[t,T]$
and setting $\omega(t):=0$.
(Intuitively, borrowing cash is risky when the security's price can jump
since the price can drop to zero at any time.)
We will see shortly that in the framework of Section~\ref{sec:continuous},
where the sample space is $C^+[0,T]$,
the condition that $G$ should be borrowing-free makes a big difference.

By a \emph{borrowing-free capital process} we will mean a process $S$
that can be represented in the form (\ref{eq:positive-capital})
where all trading strategies $G_m$ are required to be borrowing-free
and the positive series $\sum_{m=1}^{\infty}c_m$ is required to converge.
This definition is applicable to the frameworks
of both Section~\ref{sec:cadlag} and Section~\ref{sec:continuous}.

Let $E$ be a set of positive continuous functions on $[0,T]$.
Since $E\subseteq D^+[0,T]$ and $E\subseteq C^+[0,T]$,
\emph{a priori}
there are at least four natural definitions of the upper probability $\UpProb(E)$:
\begin{itemize}
\item
  $\UpProb_1(E)$ is the upper probability (\ref{eq:upper-probability})
  with $S$ ranging over the positive capital processes
  defined on the space $C^+[0,T]$ of all positive continuous functions on $[0,T]$;
\item
  $\UpProb_2(E)$ is the upper probability (\ref{eq:upper-probability}),
  exactly as it is defined there;
  namely, $S$ ranges over the positive capital processes
  defined on the space $D^{+}[0,T]$ of all positive c\`adl\`ag functions
  on $[0,T]$;
\item
  $\UpProb_3(E)$ is the upper probability (\ref{eq:upper-probability})
  with $S$ ranging over the borrowing-free capital processes
  defined on $C^+[0,T]$.
\item
  $\UpProb_4(E)$ is the upper probability (\ref{eq:upper-probability})
  with $S$ ranging over the borrowing-free capital processes
  defined on $D^{+}[0,T]$.
\end{itemize}
\ifFULL\bluebegin
  The order of the definitions is chronological.
\blueend\fi
In fact, most of these definitions are equivalent:
\begin{proposition}
  For any set $E\subseteq C^+[0,T]$ of positive continuous functions on $[0,T]$,
  \[
    \UpProb_1(E)
    \le
    \UpProb_2(E)
    =
    \UpProb_3(E)
    =
    \UpProb_4(E).
  \]
  There exists a set $E$ of positive continuous functions on $[0,T]$ such that
  \[
    \UpProb_1(E)
    =0<1=
    \UpProb_2(E)
    =
    \UpProb_3(E)
    =
    \UpProb_4(E).
  \]
\end{proposition}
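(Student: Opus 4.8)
The plan is to prove the three equalities by comparing the four classes of capital processes pairwise, and then to exhibit a single separating set $E$. Two of the comparisons are immediate from restriction. For $\UpProb_1\le\UpProb_2$, the restriction to $C^+[0,T]$ of a positive capital process on $D^+[0,T]$ is a positive capital process on $C^+[0,T]$ with the same initial capital; since any $S$ with $S_T\ge\III_E$ on $D^+[0,T]$ satisfies $S_T\ge\III_E$ on the smaller set $C^+[0,T]$, every competitor for $\UpProb_2$ yields one for $\UpProb_1$. For $\UpProb_2=\UpProb_4$, I would invoke the observation made just before the proposition: on $D^+[0,T]$ a simple capital process $\K^G$ is positive if and only if $G$ is borrowing-free (positivity forces the bound (\ref{eq:super-prudent}), and conversely a borrowing-free strategy has separately positive cash and security value). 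Hence the classes of positive and of borrowing-free capital processes on $D^+[0,T]$ coincide verbatim, and so do the infima defining $\UpProb_2$ and $\UpProb_4$.

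The one comparison requiring work is $\UpProb_3=\UpProb_4$. The inequality $\UpProb_3\le\UpProb_4$ is again restriction. For $\UpProb_4\le\UpProb_3$ I would take a borrowing-free capital process on $C^+[0,T]$ covering $E$ and modify each component strategy $G_m$ so that it becomes borrowing-free on all of $D^+[0,T]$ without altering its behaviour on continuous paths: replace each position $h_n$ by $h_n^{\mathrm{new}}:=0\vee\bigl(h_n\wedge(\K_{\tau_n}/\omega(\tau_n))\bigr)$, capping it into the super-prudent range (forcing $h_n^{\mathrm{new}}:=0$ where $\omega(\tau_n)=0$). Proceeding inductively in $n$, so that the cap uses the already-modified capital, on $C^+[0,T]$ the cap is never binding — there $G_m$ already satisfies $0\le h_n\le\K_{\tau_n}/\omega(\tau_n)$ — so the capital process, and hence its value at $T$, is unchanged on $C^+[0,T]$; on $D^+[0,T]$ the modified strategy is borrowing-free by construction. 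Since each $h_n^{\mathrm{new}}$ remains $\FFF_{\tau_n}$-measurable, summing the $\K^{G_m^{\mathrm{new}}}$ gives a borrowing-free capital process on $D^+[0,T]$ with the same initial capital that still dominates $\III_E$ at time $T$ (recall $E\subseteq C^+[0,T]$). I expect this capping step, together with verifying that measurability and positivity are preserved through the induction, to be the main obstacle.

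For the separation I would take $E:=\{\omega\in C^+[0,T]:\omega\text{ is non-constant and }\vi(\omega)\neq2\}$. By Proposition~\ref{prop:vi-continuous} this set is null in the $C^+[0,T]$ framework, which is exactly the statement $\UpProb_1(E)=0$ (a null set can be covered by a positive capital process of arbitrarily small initial capital). For the lower bound on $\UpProb_2(E)$ I would use small wiggles: fix continuous positive non-constant paths of finite total variation with $\omega_n(0)=\omega_n(T)$ and $\var(\ln\omega_n)\to0$, for instance $\omega_n(t):=1+\frac1n\sin(2\pi t/T)$. Each $\omega_n$ has $\vi(\omega_n)=1\neq2$, so $\{\omega_n\}\subseteq E$; by Proposition~\ref{prop:bounded-variation}, $\UpProb_2(\{\omega_n\})=e^{-\var(\ln\omega_n)/2}\to1$, and monotonicity of upper probability gives $\UpProb_2(E)\ge\sup_n\UpProb_2(\{\omega_n\})=1$.

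Finally, since the constant process $S\equiv1$ shows $\UpProb_2(E)\le1$, we obtain $\UpProb_2(E)=1$, and the three equalities proved above propagate this to $\UpProb_3(E)=\UpProb_4(E)=1$. Together with $\UpProb_1(E)=0$ this establishes the claimed chain $\UpProb_1(E)=0<1=\UpProb_2(E)=\UpProb_3(E)=\UpProb_4(E)$, completing the separation.
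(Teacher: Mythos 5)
Your proposal is correct and follows essentially the same route as the paper: $\UpProb_2=\UpProb_4$ from the positive-iff-borrowing-free observation on $D^+[0,T]$, $\UpProb_3=\UpProb_4$ by truncating each position into $[0,\K_{\tau_n}/\omega(\tau_n)]$ (which is exactly the paper's sketched argument, and a no-op on continuous paths), and a separating set that is null in $C^+[0,T]$ by Proposition~\ref{prop:vi-continuous} yet contains smooth non-constant paths whose singleton upper probability is close to $1$ by Proposition~\ref{prop:bounded-variation}. The only cosmetic difference is the choice of witness paths ($1+\frac1n\sin(2\pi t/T)$ versus the paper's $1+\epsilon t$) and the precise description of $E$; both work for the same reason.
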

\begin{proof}
  The equality $\UpProb_2(E)=\UpProb_4(E)$ has already been demonstrated,
  and the equality $\UpProb_3(E)=\UpProb_4(E)$ is not difficult to prove.
  \ifFULL\bluebegin
    In the proof of $\UpProb_3(E)=\UpProb_4(E)$,
    the non-trivial part is to show
    that any borrowing-free capital process $S$ on $C^+[0,T]$
    can be extended to a borrowing-free capital process on $D^+[0,T]$.
    By definition,
    $S$ can be extended to some capital process on $D^+[0,T]$.
    It is sufficient to ensure that (\ref{eq:super-prudent})
    holds for all $n$.
    If $\omega|_{[0,\tau_n]}$ is continuous,
    we know that this condition holds;
    otherwise, we can truncate $h_n$
    to the interval $[0,\K_{\tau_n}/\omega(\tau_n)]$.
  \blueend\fi
  Therefore, $\UpProb_2(E)=\UpProb_3(E)=\UpProb_4(E)$.
  Now let $E$ be the set of all $\omega\in C^+[0,T]$
  such that $\vi(\omega)\in(0,2)$.
  According to Proposition~\ref{prop:vi-continuous},
  $\UpProb_1(E)=0$.
  And according to Proposition~\ref{prop:bounded-variation},
  $\UpProb_2(E)=1$:
  there are even individual elements $\omega\in E$
  for which $\UpProb_2(\{\omega\})$ is arbitrarily close to~$1$
  (such as $\omega(t)=1+\epsilon t$
  for sufficiently small $\left|\epsilon\right|\ne0$).
\end{proof}

\ifFULL\bluebegin
  \section{Further research}

  Use the method of proof of Proposition~\ref{prop:main} in the case of discrete time,
  to obtain something like
  \[
    S_T(\omega)
    \ge
    \exp(\vi(\omega)-2).
  \]
  (The proof should be applied to $\MM_a^b(\ln\omega)$ rather than $\MM_a^b(\omega)$.)
\blueend\fi

\subsection*{Acknowledgments}

Akimichi Takemura's, George Lowther's, and an anonymous referee's advice
is very much appreciated.
I am grateful to Rimas Norvai\v{s}a for inviting me
to give the talk at the Vilnius conference
and for useful discussions.
This work was partially supported by EPSRC (grant EP/F002998/1).

\appendix

\section{The case of finite $p$-variation, $p>2$}

Let $p>2$.
Theorem \ref{thm:main} says that the trader can become infinitely rich
when $\var_p(\omega)=\infty$.
This appendix treats the case where $\var_p(\omega)$ is merely large,
not infinitely large.
We are now in the framework of Section \ref{sec:cadlag}:
the sample space is $\Omega:=D^+[0,T]$.

\begin{proposition}\label{prop:explicit}
  Let $p=2+\epsilon>2$ and let $\delta>0$.
  There is a positive capital process $S$ such that $S_0=1$
  and, for all $\omega\in\Omega$,
  \begin{equation}\label{eq:explicit}
    S_T(\omega)
    >
    (1-2^{-\epsilon})
    (1-2^{-\delta})
    2^{-6-\epsilon-\delta}
    \frac{\var_{2+\epsilon}(\omega)}{(1\vee\sup\omega)^{2+\epsilon+\delta}}
    -
    \frac14.
  \end{equation}
\end{proposition}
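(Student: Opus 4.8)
The plan is to read this as the quantitative, single‑strategy form of Proposition~\ref{prop:main} specialized to $\phi(u)=u^{2+\epsilon}$, so I would recycle the upcrossing machinery of that proof and only add the bookkeeping needed to (i) make the initial capital exactly $1$ and (ii) cover paths of arbitrary height $\sup\omega$ with a single fixed positive capital process. As in Proposition~\ref{prop:main}, the atoms of the construction are the processes supplied by Lemma~\ref{lem:Doob}: for a dyadic interval $(k2^{-j},(k+1)2^{-j})$ take the positive simple capital process $S^{j,k}$ with $S^{j,k}_0=k2^{-j}$ and $S^{j,k}_T(\omega)\ge 2^{-j}\MM_{k2^{-j}}^{(k+1)2^{-j}}(\omega)$.

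The first superposition is over dyadic scales within the unit range $[0,1]$: averaging $S^{j,k}$ over the $2^{j}$ intervals $(k2^{-j},(k+1)2^{-j})$, $k=0,\dots,2^{j}-1$, and then combining the scales $j=0,1,\dots$ with the geometric weights $2^{-j\epsilon}$ produces a unit‑range process whose terminal value is at least $\sum_{j}\phi(2^{-j})\,\MM(\omega,2^{-j})$; the series $\sum_j 2^{-j\epsilon}=1/(1-2^{-\epsilon})$ converges and, after normalizing the initial capital to $1$, is the source of the factor $(1-2^{-\epsilon})$. The second superposition handles the unknown height. Writing $\hat S^{(l)}$ for the $2^{l}$‑rescaling of this process (intervals of length $2^{l-j}$, covering $[0,2^l]$) normalized again to unit initial capital, I set $S:=\sum_{l\ge0}(1-2^{-\delta})2^{-\delta l}\hat S^{(l)}$; since the weights sum to $1$ this gives $S_0=1$, each summand is a positive capital process, and the factor $(1-2^{-\delta})$ appears. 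For a given $\omega$ every level $l$ with $2^l\ge\sup\omega$ contributes a genuine bound, and summing the geometric tail from the smallest such $l^\ast$ (for which $2^{l^\ast}<2(1\vee\sup\omega)$) converts $2^{-l^\ast(2+\epsilon+\delta)}$ into the advertised $(1\vee\sup\omega)^{-(2+\epsilon+\delta)}$: the extra exponent $\delta$ is produced by this second geometric sum exactly as the exponent $\epsilon$ was produced by the first.

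Two elementary estimates then turn the crossing bound into a $\var_{2+\epsilon}$ bound. For a fixed partition and each upward increment, let $j(i)$ be the coarsest dyadic scale whose interval fits inside $[\omega(t_{i-1}),\omega(t_i)]$; minimality gives $\omega(t_i)-\omega(t_{i-1})<4\cdot2^{-j(i)}$, hence $\phi(\omega(t_i)-\omega(t_{i-1}))\le\mathrm{const}\cdot\phi(2^{-j(i)})$, and since the increment forces an upcrossing at scale $j(i)$ the weighted sum $\sum_j\phi(2^{-j})\MM(\omega,2^{-j})$ dominates a constant multiple of the partial sum over the upward increments. The downward increments are covered without any short‑selling process through $\MM_a^b(\omega)\ge\DD_a^b(\omega)-1$ applied to each dyadic interval; the resulting total defect $\sum_j 2^{j}\phi(2^{-j})=\sum_j2^{-j(1+\epsilon)}$ is summable, and after averaging the upward and downward bounds it is this defect, weighted down by the two normalizations, that materializes as the additive constant $\tfrac14$. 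Taking the supremum over partitions replaces the increment sums by $\var_{2+\epsilon}(\omega)$, yielding a bound of the stated shape.

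The step I expect to be genuinely finicky is not conceptual but arithmetical: choosing the two families of weights and the comparison constant in the combinatorial step so that the product of all the $2^{-(\cdots)}$ factors collapses to exactly $2^{-6-\epsilon-\delta}$ and the defect to exactly $\tfrac14$. The subtle structural point to get right is that the process must be fixed in advance yet deliver an $\omega$‑dependent bound for every $\omega$; the level superposition resolves this precisely because the processes $\hat S^{(l)}$ with $l<l^\ast$ stay nonnegative (they simply certify nothing) while those with $l\ge l^\ast$ all certify the bound, so discarding the former costs nothing. I would also verify convergence and positivity of the doubly‑infinite sum and confirm that distinct increments are charged to distinct crossings, so that the inequality $\sum_j\phi(2^{-j})\MM(\omega,2^{-j})\ge\mathrm{const}\cdot\sum_i\phi(|\omega(t_i)-\omega(t_{i-1})|)$ loses no more than the claimed constant.
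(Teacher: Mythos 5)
Your proposal follows essentially the same route as the paper's proof: both specialize the upcrossing machinery of Proposition~\ref{prop:main} (built from Lemma~\ref{lem:Doob}) to $\phi(u)=u^{2+\epsilon}$, mix geometrically over dyadic scales with ratio $2^{-\epsilon}$ and over height levels $2^L$ with ratio $2^{-\delta}$, charge each increment to an upcrossing at the coarsest fitting dyadic scale (losing a factor of $4$), handle downward increments via the upcrossing/downcrossing discrepancy whose summable defect yields the additive $\tfrac14$, and finally pick $L$ with $1\vee\sup\omega\le 2^L<2(1\vee\sup\omega)$. The only remaining work is the constant-tracking you already flag, which the paper carries out in the same way.
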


\begin{proof}
  In this proof we will see what the argument used in the proof of Theorem \ref{thm:main}
  gives in the case of a finite $\var_p(\omega)$.
  It will be convenient to modify the function $j(i)$ used in that argument,
  making it dependent on the given upper bound $2^L$ on $\omega$.
  For $L\in\{0,1,2,\ldots\}$,
  define $j_L(i)$ to be the smallest integer $j\ge2-L$ satisfying (\ref{eq:j}).
  This definition ensures that $2^{-j_L(i)}\ge\frac14(\omega(t_i)-\omega(t_{i-1}))$
  when $\sup\omega\le2^L$.

  Fix temporarily $L\in\{0,1,2,\ldots\}$.
  Now we set
  $w(j):=(1-2^{-\epsilon})2^{\epsilon(2-L)}2^{-\epsilon j}$,
  $j=2-L,3-L,\ldots$;
  $(1-2^{-\epsilon})2^{\epsilon(2-L)}$ is the normalizing constant ensuring
  $\sum_{j=2-L}^{\infty}w(j)=1$.
  Using the inequality between the two extreme terms in (\ref{eq:to-explain})
  (with the lower limit of summation $j=2-L$ instead of $j=0$)
  and setting $S^{(L)}:=2^{1-L}S$,
  we obtain a positive capital process satisfying $S^{(L)}_0\le1$ and
  \begin{align*}
    S^{(L)}_{T}(\omega)
    &=
    2^{1-L}
    S_{T}(\omega)
    \ge
    2^{1-2L}
    \sum_{i\in I_{+}}
    w(j_L(i))
    (2^{-j_L(i)})^2\\
    &=
    2^{1-2L}
    (1-2^{-\epsilon})
    2^{\epsilon(2-L)}
    \sum_{i\in I_{+}}
    (2^{-j_L(i)})^{2+\epsilon}\\
    &\ge
    2^{1-2L}
    (1-2^{-\epsilon})
    2^{\epsilon(2-L)}
    4^{-2-\epsilon}
    \sum_{i\in I_{+}}
    (\omega(t_i)-\omega(t_{i-1}))^{2+\epsilon}
  \end{align*}
  instead of (\ref{eq:to-explain})--(\ref{eq:end}).
  And instead of (\ref{eq:inequality}) we now obtain
  \begin{align*}
    \sup\omega\le2^L
    \Longrightarrow
    S^{(L)}_T(\omega)
    &\ge
    2^{-2L}
    (1-2^{-\epsilon})
    2^{\epsilon(2-L)}
    4^{-2-\epsilon}
    \var_{2+\epsilon}(\omega)\\
    &\quad-
    2^{-L}
    \sum_{j=2-L}^{\infty}
    w(j)2^{-j}\\
    &>
    (1-2^{-\epsilon})
    2^{-2L-\epsilon L}
    4^{-2}
    \var_{2+\epsilon}(\omega)
    -
    \frac14.
  \end{align*}
  \ifFULL\bluebegin
    The upper bound $\frac14$ for the term
    \[
      2^{-L}
      \sum_{j=2-L}^{\infty}
      w(j)2^{-j}
    \]
    is obvious:
    the weighted average (with weights $w(j)$) does not exceed the largest term $2^{L-2}$.
    These are more detailed calculations:
    \begin{align*}
      2^{-L}
      \sum_{j=2-L}^{\infty}
      w(j)2^{-j}
      &=
      2^{-L}
      (1-2^{-\epsilon})
      2^{(2-L)\epsilon}
      \sum_{j=2-L}^{\infty}
      2^{-(1+\epsilon)j}\\
      &=
      2^{-L}
      (1-2^{-\epsilon})
      2^{(2-L)\epsilon}
      2^{-(1+\epsilon)(2-L)}
      \frac{1}{1-2^{-1-\epsilon}}\\
      &=
      2^{-2}
      (1-2^{-\epsilon})
      \frac{2^{1+\epsilon}}{2^{1+\epsilon}-1}
      =
      2^{-1}
      \frac{2^{\epsilon}-1}{2^{1+\epsilon}-1}
      <
      2^{-2}.
    \end{align*}
    Notice that the inequality $\le$ is very crude for small $\epsilon$
    (it becomes tight as $\epsilon\to\infty$).
  \blueend\fi

  Set $S:=\sum_{L=0}^{\infty}(1-2^{-\delta})2^{-\delta L}S^{(L)}$
  (recycling the notation $S$);
  $1-2^{-\delta}$ is the normalizing constant
  ensuring that the weights
  $(1-2^{-\delta})2^{-\delta L}$
  sum to $1$.
  For any $\omega$ and any upper bound $2^L\ge\sup\omega$,
  with $L\in\{0,1,2,\ldots\}$,
  we will have
  \[
    S_T(\omega)
    \ge
    (1-2^{-\delta}) 2^{-\delta L}
    S^{(L)}_T(\omega)
    >
    (1-2^{-\epsilon}) (1-2^{-\delta})
    4^{-2}
    (2^L)^{-2-\epsilon-\delta}
    \var_{2+\epsilon}(\omega)
    -
    \frac14.
  \]
  Taking the $L$ satisfying $1\vee\sup\omega\le2^L<2(1\vee\sup\omega)$,
  we obtain
  \begin{equation*}
    S_T(\omega)
    >
    (1-2^{-\epsilon})
    (1-2^{-\delta})
    4^{-2}
    2^{-2-\epsilon-\delta}
    (1\vee\sup\omega)^{-2-\epsilon-\delta}
    \var_{2+\epsilon}(\omega)
    -
    \frac14,
  \end{equation*}
  which is equivalent to (\ref{eq:explicit}).
\end{proof}

Proposition \ref{prop:explicit} is mainly motivated by the case of discrete time.
Suppose the trader is allowed to change his positions in $\omega$ only at times
$0,T/N,2T/N,\ldots,T$ for a strictly positive integer $N$.
This restriction is equivalent to replacing $\omega$ by $\omega_N\in D^+[0,T]$
defined by
\[
  \omega_N(t)
  :=
  \omega
  \left(
    \frac{T}{N}
    \left\lfloor
      \frac{N}{T}
      t
    \right\rfloor
  \right),
  \quad
  t\in[0,T].
\]
The discrete-time version of Proposition~\ref{prop:explicit}
(which is weaker than Proposition~\ref{prop:explicit} itself)
says that there is a positive capital process $S$
such that $S_0=1$ and (\ref{eq:explicit})
holds for all elements of $\Omega$ of the form $\omega_N$.

\ifFULL\bluebegin
  It is easy to see that,
  for any $\omega\in\Omega$ and any $p>0$,
  $\lim_{N\to\infty}\var_p(\omega_N)=\var_p(\omega)$.
  In combination with Proposition~\ref{prop:explicit},
  we can see that there exists a positive capital process $S$ with $S_0=1$
  (obtained by mixing over $\epsilon$
  of capital processes whose existence is asserted in Proposition~\ref{prop:explicit})
  such that,
  when $\vi(\omega)>2$,
  we have not only $S_T(\omega)=\infty$
  (as asserted by Theorem~\ref{thm:main})
  but also $\lim_{N\to\infty} S_T(\omega_N) = \infty$.
\blueend\fi

\section{Right-continuous price paths}
\label{app:B}

In this appendix we will relax the assumption that the price path $\omega$ is c\`adl\`ag,
and will consider the sample space $\Omega:=R^+[0,T]$ consisting
of all positive right-continuous functions $\omega:[0,T]\to[0,\infty)$.
The definitions of the $\sigma$-algebras $\FFF_t$, processes, events,
simple capital processes, positive capital processes,
and the qualification ``almost surely''
stay literally the same as in Section~\ref{sec:cadlag}.
We will check that Theorem~\ref{thm:main} will continue to hold
in this less restrictive framework.
But first we state the following simple version
of Theorem VI.3(2) in \cite{Dellacherie/Meyer:1982}.

\begin{corollary}\label{cor:wild}
  Almost surely, the price path $\omega\in R^+[0,T]$ is c\`adl\`ag.
\end{corollary}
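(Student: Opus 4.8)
The plan is to reduce the statement to the single fact that, almost surely, the upcrossing count $\MM_a^b(\omega)$ is finite for every rational pair $0\le a<b$, and then to invoke the deterministic content of Theorem~VI.3(2) in \cite{dellacherie/meyer:1982}: a right-continuous function all of whose upcrossing counts over rational intervals are finite has a one-sided limit at every point and is therefore c\`adl\`ag. Concretely, if $\omega$ fails to have a left limit at some $t$, then $\liminf_{s\uparrow t}\omega(s)<\limsup_{s\uparrow t}\omega(s)$, so there are rationals $0\le a<b$ strictly between them and $\omega$ upcrosses $(a,b)$ infinitely often; hence
$$
  \{\omega\text{ not c\`adl\`ag}\}
  \subseteq
  \bigcup_{0\le a<b\text{ rational}}
  \{\omega\st\MM_a^b(\omega)=\infty\}.
$$
Since a countable union of null sets is null, it suffices to show that each event $\{\MM_a^b(\omega)=\infty\}$ is null.

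First I would establish this using Lemma~\ref{lem:Doob}, but with one modification forced by the present framework. The strategy built there uses infinitely many stopping times, and in $R^+[0,T]$ a path may genuinely cross $(a,b)$ infinitely often, so that construction is no longer a \emph{simple} trading strategy. I would therefore truncate: for each $N$ let $G^{(N)}$ run exactly the strategy of Lemma~\ref{lem:Doob} but close out and stay in cash after the $N$-th completed upcrossing, i.e.\ use only $\tau_1,\dots,\tau_{2N}$. For every $\omega$ this has finitely many finite stopping times, so $G^{(N)}$ is a legitimate simple trading strategy; its capital process $S^{(N)}$ is positive (exactly as in Lemma~\ref{lem:Doob}, since $\omega\ge0$), starts from $a$, and satisfies $S^{(N)}_T(\omega)\ge(b-a)\min(\MM_a^b(\omega),N)$. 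Mixing with weights $w_N>0$ chosen so that $\sum_N w_N<\infty$ but $\sum_N w_N N=\infty$ (for instance $w_N\propto N^{-3/2}$) yields a positive capital process $S=\sum_N w_N S^{(N)}$ with finite initial capital $a\sum_N w_N$; on $\{\MM_a^b(\omega)=\infty\}$ each summand contributes $(b-a)w_N N$, so $S_T(\omega)=\infty$ there. Rescaling to initial capital $1$ shows the event is null.

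The last point to address, if one insists that c\`adl\`ag require a \emph{finite} left limit, is that a right-continuous $\omega$ with all $\MM_a^b$ finite could still have left limit $+\infty$, i.e.\ $\sup_{[0,T]}\omega=\infty$. This event is itself null: buying one unit of the security at time $0$ and liquidating the first time the price reaches level $K$ gives a positive capital process whose terminal value is at least (initial capital)$\,\times K/\omega(0)$ whenever the price ever reaches $K$, and on $\{\sup\omega=\infty\}$ the price reaches every level; mixing over $K=2^n$ with weights satisfying $\sum_n w_n<\infty$ and $\sum_n w_n 2^n=\infty$ makes the terminal capital infinite there.

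The main obstacle is that Lemma~\ref{lem:Doob} cannot be quoted verbatim: its proof uses the c\`adl\`ag hypothesis precisely to guarantee finitely many crossings, which is exactly what fails on the target event. The truncation-and-mixing device above is the way around this, and the crux is checking that the truncated processes stay positive while the mixture still explodes. A secondary technical nuisance, in the unbounded case, is the requirement that positions be bounded: the natural position $c/\omega(0)$ is unbounded as $\omega(0)\to0$, so one must split the argument according to the dyadic scale of $\omega(0)$, treating the degenerate case $\omega(0)=0$ (where the security can be bought for free) separately.
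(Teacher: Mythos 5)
Your proof is correct and follows essentially the same route as the paper's: reduce the c\`adl\`ag property to the a.s.\ boundedness of $\omega$ together with the a.s.\ finiteness of $\MM_a^b(\omega)$ over rational intervals, and obtain the latter by running the strategy of Lemma~\ref{lem:Doob} in truncated form so that it remains a simple trading strategy. The only point of divergence is the truncation device: you stop each strategy after $N$ upcrossings and mix over $N$ with weights satisfying $\sum_N w_N<\infty$ and $\sum_N N w_N=\infty$, whereas the paper stops the $(a,b)$-strategy once its capital reaches $1/w(a,b)$ and gets the explosion from the fact that infinitely many rational subintervals of a badly crossed interval each contribute their full cap; both work, and your version (together with your remark on keeping the initial position bounded by splitting on the dyadic scale of $\omega(0)$) is, if anything, spelled out more carefully.
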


\begin{proof}
  %
  We start by noticing that a typical $\omega\in R^+[0,T]$ is bounded above.
  Indeed, for $m=1,2,\ldots$,
  let $G_m$ be the simple trading strategy with initial capital 1,
  stopping times $\tau_1:=0$,
  $\tau_2:=\inf\{t\st\omega(t)\ge2^m\}$,
  $\tau_3=\tau_4=\cdots:=\infty$,
  and positions $h_1:=1/\omega(0)$,
  $h_2=h_3=\cdots:=0$
  (if $\omega(0)=0$, set $h_1:=1$).
  The positive capital process $\sum_{m=1}^{\infty}2^{-m}\K^{G_m}$
  has initial capital 1 and final capital $\infty$ on unbounded $\omega$.

  Now it suffices to prove that the number of upcrossings of any open interval $(a,b)$
  with rational endpoints is finite almost surely
  (\cite{Dellacherie/Meyer:1978}, Theorem IV.22).
  Fix a set of weights $w(a,b)>0$
  such that $\sum_{(a,b)}aw(a,b)<\infty$ and $\sum_{(a,b)}w(a,b)=1$,
  $(a,b)$ ranging over the open intervals with rational endpoints $a\ge0$ and $b>a$.
  For each $(a,b)$,
  let $S^{(a,b)}$ be the simple capital process $S$ from the proof Lemma~\ref{lem:Doob}
  modified as follows:
  to ensure that $\tau_n<\infty$ for only finitely many $n$,
  we stop trading when $S^{(a,b)}$ reaches the value $1/w(a,b)$.
  The positive capital process $\sum_{(a,b)}w(a,b)S^{(a,b)}$
  has a finite initial value and the infinite final value
  whenever the number of upcrossings of some open interval $(a,b)$ is infinite:
  indeed, if the interval $(a,b)$ is crossed infinitely often,
  any of its subintervals will be crossed infinitely often as well.
\end{proof}

\ifFULL\bluebegin
  Another interesting corollary:
  see Lemma 3.61 of \cite{Dudley/Norvaisa:2011}
  (which contains the main result of \cite{Stricker:1979rem}).
\blueend\fi

Corollary~\ref{cor:wild} does not mean that the results
that we have proved above for c\`adl\`ag price paths
will automatically hold for right-continuous price paths.
For example, the proof of Doob's fundamental Lemma~\ref{lem:Doob}
does not work for right-continuous price paths:
we will have $\lim_{n\to\infty}\tau_n(\omega)<\infty$
for some rational $a$ and $b>a$ whenever $\omega$ is not c\`adl\`ag.
(But there are ways around this difficulty,
as we saw in the proof of Corollary~\ref{cor:wild}.)

\begin{proposition}
  For typical $\omega\in R^+[0,T]$,
  $
    \vi(\omega)\le2
  $.
\end{proposition}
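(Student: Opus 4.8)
The plan is to rerun the proof of Theorem~\ref{thm:main} on the larger sample space $R^+[0,T]$, the only obstruction being that the upcrossing strategies of Lemma~\ref{lem:Doob} need not trade finitely often on a path that fails to be c\`adl\`ag. First I would split the target event as
\[
  \{\omega\in R^+[0,T]:\vi(\omega)>2\}
  \subseteq N\cup B,
\]
where $N$ is the set of $\omega$ that are not c\`adl\`ag and $B$ is the set of c\`adl\`ag $\omega$ with $\vi(\omega)>2$. By Corollary~\ref{cor:wild} the set $N$ is null in the present framework, so, the union of countably many null events being null, it suffices to show that $B$ is null. Now $B\subseteq D^+[0,T]$, so Theorem~\ref{thm:main} already produces a witnessing positive capital process on $D^+[0,T]$; the whole point is to upgrade it to a legitimate positive capital process on $R^+[0,T]$, i.e.\ one whose component simple strategies trade only finitely often for \emph{every} $\omega\in R^+[0,T]$, not merely for c\`adl\`ag ones.

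Next I would manufacture such a witness by repeating the proof of Proposition~\ref{prop:main} with the building blocks of Lemma~\ref{lem:Doob} replaced by finite-trade surrogates, exactly in the spirit of Corollary~\ref{cor:wild}. I may first reduce to bounded paths: as in the opening paragraph of the proof of Corollary~\ref{cor:wild}, the event $\sup\omega=\infty$ is null, so I work under $\sup\omega\le2^L$ and take a countable union over $L$ at the end. For each dyadic band I then run the strategy of Lemma~\ref{lem:Doob} but freeze it (position $0$ forever) as soon as its capital first reaches a prescribed threshold. Since each completed upcrossing raises the capital by the band width, the threshold is attained after finitely many upcrossings, so the surrogate is a bona fide simple capital process on all of $R^+[0,T]$; and, crucially, freezing alters only the stopping, not the initial capital, so the thresholds may be taken as large as I like without disturbing the bookkeeping of (\ref{eq:proof-1})--(\ref{eq:inequality}).

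With these surrogates in place the proof of Proposition~\ref{prop:main} goes through essentially verbatim: substitute them for the $S^{j,k}$ of (\ref{eq:proof-1}), average over $k$ and sum over the scales $j$ against the weights $w(j)$, split the increments into $I_+$ and $I_-$ and append the downcrossing estimate (\ref{eq:chain-start})--(\ref{eq:chain-end}), and take the supremum over partitions. Each of these steps uses only the lower bounds on the surrogate capitals, their positivity, and their initial values, all of which survive the modification; and a countable sum of positive capital processes on $R^+[0,T]$ is again one. Setting $\phi(u)=u^p$ for rational $p>2$ and taking the countable union over $p$ (and over $L$) then gives that $\{\vi(\omega)>2\}$ is null, which is the assertion.

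The step I expect to be the main obstacle is the second paragraph: arranging a \emph{single} positive capital process that is valid (finitely many trades) on every right-continuous path yet still explodes on every c\`adl\`ag $\omega$ with $\vi(\omega)>2$. Threshold-stopping secures validity but caps each block, so the delicate point is to choose the thresholds so that the scale-by-scale divergence driving (\ref{eq:to-explain})--(\ref{eq:end}) is not destroyed for any such $\omega$; the natural remedy is to reseed a fresh block once a threshold is reached, or equivalently to superpose the construction over a sequence of increasing threshold levels, so that no single path's concentration of upcrossings into a few bands can beat the caps. The complementary, infinite-upcrossing case is not a problem here: it lives entirely in $N$ and is already dispatched by Corollary~\ref{cor:wild} through its observation that one interval crossed infinitely often forces infinitely many rational subintervals to be crossed infinitely often.
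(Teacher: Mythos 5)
Your overall architecture matches the paper's: dispose of non-c\`adl\`ag and unbounded paths via Corollary~\ref{cor:wild}, and make the strategies of Lemma~\ref{lem:Doob} legitimate on all of $R^+[0,T]$ by stopping each one when its capital reaches a threshold. But the step you yourself flag as the main obstacle --- showing that the caps do not destroy the divergence driving (\ref{eq:to-explain})--(\ref{eq:end}) --- is left unresolved, and the remedy you sketch does not work as stated. If you reseed a band's strategy after each saturation, or superpose copies with increasing thresholds $\Theta_r$, each copy must carry its own weight $v_r$ from a convergent series of initial capitals; on a path whose band is upcrossed $M$ times you then capture only about $\sum_{r\le R}v_r\Theta_r$ out of a potential $(b-a)M\approx\sum_{r\le R}\Theta_r$, and since $\sum_r v_r<\infty$ forces $v_r\to0$, the fraction captured tends to $0$ as $M\to\infty$. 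This unbounded dilution is fatal: one can exhibit sequences $\MM(\omega,2^{-j})$ for which $\sum_j w(j)2^{-2j}\MM(\omega,2^{-j})=\infty$ while the diluted sum converges, so the implication ``$\var_{\phi}(\omega)=\infty\Rightarrow S_T(\omega)=\infty$'' is not recovered without a genuinely new estimate.

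The paper closes this gap with a dichotomy rather than reseeding. The aggregated process at scale $j$ is capped at exactly $1/w(j)$, giving $S_T(\omega)\ge\sum_j w(j)\bigl(2^{-L-2j}\MM(\omega,2^{-j})\wedge 1/w(j)\bigr)$. Either the cap binds for infinitely many $j$, in which case $S_T(\omega)\ge\sum_j w(j)\cdot(1/w(j))=\infty$ outright; or it binds only for $j<J(\omega)$, in which case the original chain applies verbatim to the scales $j\ge J$, i.e.\ to increments of size at most $2^{-J}$, while the larger increments --- of which a bounded c\`adl\`ag path has only finitely many (\cite{billingsley:1968}, Section~14, Lemma~1) --- contribute only a finite amount $C(\omega)$ to $\var_{\phi}(\omega)$ and can be discarded. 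You would need to adopt this dichotomy (or supply a quantitative control of the reseeding loss); as written the proof is incomplete at its central point.
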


\begin{proof}
  Fix $p>2$ and set $\phi(u):=u^p$, $u\in[0,\infty)$.
  To see that Theorem~\ref{thm:main} continues to hold
  for the new sample space $\Omega=R^+[0,T]$,
  we will modify the proof of Proposition~\ref{prop:main}.

  Fix positive integer $L$.
  In view of Corollary~\ref{cor:wild},
  it suffices to construct a positive capital process
  that starts from a finite initial capital
  and attains final capital $\infty$ on all c\`adl\`ag $\omega$
  satisfying $\var_{\phi}(\omega)=\infty$ and $\sup\omega\le2^L$.
  Proceed as in the proof of Proposition~\ref{prop:main}
  until (\ref{eq:proof-1}),
  which should be replaced by
  \begin{equation*}
    S^{j,k}_T(\omega)
    \ge
    2^{-j}
    \MM_{k2^{-j}}^{(k+1)2^{-j}} (\omega)
    \wedge
    \frac{1}{w(j)};
  \end{equation*}
  the term $1/w(j)$ makes it possible to prevent the trading strategy
  leading to $S^{j,k}$
  from trading infinitely often.
  This will lead to
  \[
    S_{T}^j(\omega)
    \ge
    2^{-L-2j}
    \MM(\omega,2^{-j})
    \wedge
    \frac{1}{w(j)}
    \text{ when $\sup\omega\le2^L$}
  \]
  in place of (\ref{eq:proof-2}).
  In place of the first inequality in (\ref{eq:to-explain})
  we now have
  \[
    S_{T}(\omega)
    \ge
    \sum_{j=0}^{\infty}
    w(j)
    2^{-L-2j}
    \MM(\omega,2^{-j})
    \wedge
    1.
  \]
  In the case where
  $
    w(j)
    2^{-L-2j}
    \MM(\omega,2^{-j})
    >
    1
  $
  infinitely often our goal is achieved:
  $S_T(\omega)=\infty$.
  Therefore,
  we will assume that 
  $
    w(j)
    2^{-L-2j}
    \MM(\omega,2^{-j})
    \le
    1
  $
  for all $j\ge J$
  (where $J=J(\omega)$ depends on $\omega$).
  The chain (\ref{eq:to-explain})--(\ref{eq:end})
  can then be modified to
  \begin{align*}
    S_{T}(\omega)
    &\ge
    \sum_{j=J}^{\infty}
    w(j)
    2^{-L-2j}
    \MM(\omega,2^{-j})
    \ge
    \sum_{i\in I_{+}:j(i)\ge J}
    w(j(i))
    2^{-L-2j(i)}\\
    &=
    2^{-L}
    \sum_{i\in I_{+}:j(i)\ge J}
    \phi
    \left(
      2^{-j(i)}
    \right)\\
    &\ge
    2^{-L} 4^{-p}
    \sum_{i\in I_{+}:\omega(t_i)-\omega(t_{i-1})\le 2^{-J}}
    \phi
    \left(
      \omega(t_i)-\omega(t_{i-1})
    \right).
  \end{align*}
  Similarly,
  replacing the lower summation limit $j=0$ by $j=J$
  in the chain (\ref{eq:chain-start})--(\ref{eq:chain-end}),
  we obtain
  \begin{equation*}
    S_{T}(\omega)
    \ge
    2^{-L} 4^{-p}
    \sum_{i\in I_{-}:\omega(t_{i-1})-\omega(t_{i})\le 2^{-J}}
    \phi
    \left(
      \omega(t_{i-1})-\omega(t_{i})
    \right)
    -
    1.
  \end{equation*}
  Averaging the two lower bounds for $S_T(\omega)$
  now gives
  \begin{align*}
    S_T(\omega)
    &\ge
    2^{-L-1} 4^{-p}
    \sum_{i:\lvert\omega(t_{i})-\omega(t_{i-1})\rvert\le 2^{-J}}
    \phi
    \left(\left|
      \omega(t_{i})-\omega(t_{i-1})
    \right|\right)
    -
    \frac{1}{2}\\
    &\ge
    2^{-L-1} 4^{-p}
    \left(
      \sum_{i=1}^n
      \phi
      \left(\left|
        \omega(t_{i})-\omega(t_{i-1})
      \right|\right)
      -
      C(\omega)
    \right)
    -
    \frac{1}{2},
  \end{align*}
  where $C(\omega)<\infty$
  (by, e.g., \cite{Billingsley:1968}, Section~14, Lemma~1).
  Taking supremum over all partitions gives
  \begin{equation*}
    \left(
      \sup\omega\le2^L
      \And
      \var_{\phi}(\omega)=\infty
    \right)
    \Longrightarrow
    S_T(\omega) = \infty
  \end{equation*}
  in place of (\ref{eq:inequality}),
  which completes the proof.
\end{proof}

\section{Foundations}
\label{app:C}

We have considered three choices for the set of allowed price paths,
which we called the sample space:
$C^+[0,T]$ in Section~\ref{sec:continuous},
$D^{+}[0,T]$ in Section~\ref{sec:cadlag},
and $R^{+}[0,T]$ in Appendix~\ref{app:B}.
The assumption of continuity is traditional in this line of work
\cite{Takeuchi/etal:2009,\CTII,\CTIV},
and right-continuity is a natural relaxation of continuity
that agrees with the direction of time:
for each $t$, $\omega$ will not deviate much from $\omega(t)$ immediately after $t$.
The purpose of this appendix is to justify some details of our definitions,
and to discuss alternative definitions.

In the case of the sample spaces $D^+[0,T]$ and $R^+[0,T]$,
we defined $\FFF_t$ to be the universal completion of $\FFF^{\circ}_t$,
the $\sigma$-algebra generated by the projections
$\omega\mapsto\omega(s)$, $s\le t$.
In the case of the sample space $C^+[0,T]$
we could simply set $\FFF_t:=\FFF^{\circ}_t$
(as in \cite{\CTII,\CTIV}),
with the same definition of $\FFF^{\circ}_t$.
However, the most natural choice of $\FFF_t$
is to define it as the $\sigma$-algebra of all \emph{cylinder sets},
i.e., all sets $E\subseteq\Omega$ such that
\begin{equation}\label{eq:cylinder}
  \left(
    \omega\in E,
    \omega'\in\Omega,
    \omega|_{[0,t]} = \omega'|_{[0,t]}
  \right)
  \Longrightarrow
  \omega'\in E.
\end{equation}
The definitions of stopping times,
capital processes, upper probability, etc.,
stay the same, but they simplify greatly.
For example, a function is $\FFF_{\tau}$-measurable,
where $\tau$ is a stopping time,
if and only if it depends on $\omega$ only via its restriction
to the interval $[0,\tau(\omega)]$.

\ifFULL\bluebegin
  If we do not make any continuity assumptions about $S$,
  it becomes impossible to define even simplest stopping times,
  such as hitting times of closed sets.
\blueend\fi

\ifFULL\bluebegin
  The following simple result,
  whose method of proof is borrowed from \cite{Dawid:1985JASA},
  gives an alternative proof (to Proposition~\ref{prop:bounded-variation})
  that the trader cannot become infinitely rich on non-constant c\`adl\`ag price paths.
  Let $\Omega:=D^+[0,T]$.

  \begin{lemma}\label{lem:coherence}
    For any positive capital process $S$ there exists
    a non-constant c\`adl\`ag $\omega\in\Omega$
    such that $S_T(\omega)\le S_0$.
  \end{lemma}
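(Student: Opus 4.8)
The plan is to defeat any prescribed positive capital process $S$ by a single martingale measure, using the fact that under a risk-neutral law each simple capital process is a supermartingale, so its terminal expectation cannot exceed its initial value. First I would fix a probability measure $P$ on $\Omega=D^+[0,T]$ under which the coordinate process $\omega\mapsto\omega(t)$ is a strictly positive, non-constant martingale. The simplest choice is the law of the geometric Brownian motion $\omega(t)=\exp(W_t-t/2)$, with $W$ a standard Brownian motion: its paths are continuous, hence lie in $C^+[0,T]\subseteq\Omega$, are non-constant $P$-almost surely, and form a positive martingale on $[0,T]$. Since $\FFF_T$ is the universal completion of $\FFF^{\circ}_T$, every $\FFF_T$-measurable function---in particular $S_T$---is measurable with respect to the $P$-completion, so all expectations below are well defined.

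Next I would use the defining representation (\ref{eq:positive-capital}), $S_t=\sum_{m}\K^{G_m}_t$, and show that each component satisfies $\bbbe_P[\K^{G_m}_T]\le c_m$. Fixing $m$ and writing $c,\tau_1\le\tau_2\le\cdots,h_1,h_2,\ldots$ for the data of $G_m$, the process (\ref{eq:simple-capital}) is exactly the stochastic integral $\K^{G_m}_t-c=\int_0^t h_s\,\dd\omega(s)$ of the bounded predictable integrand $h_t:=\sum_n h_n\,\III_{(\tau_n,\tau_{n+1}]}(t)$ against the continuous martingale $\omega$; hence $\K^{G_m}-c$ is a continuous local martingale under $P$. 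Because $\K^{G_m}$ is positive by the definition of a positive capital process, Fatou's lemma upgrades this local martingale to a supermartingale, giving $\bbbe_P[\K^{G_m}_T]\le\K^{G_m}_0=c_m$. Equivalently, one may verify fairness one trade at a time: optional stopping for the martingale $\omega$ yields $\bbbe_P[\omega(\tau_{n+1}\wedge T)\mid\FFF_{\tau_n}]=\omega(\tau_n\wedge T)$, so each increment has vanishing conditional expectation.

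Summing over $m$ by Tonelli's theorem (all terms positive) gives $\bbbe_P[S_T]=\sum_m\bbbe_P[\K^{G_m}_T]\le\sum_m c_m=S_0<\infty$. I would then conclude by contradiction. If $S_T(\omega)>S_0$ held for every non-constant $\omega$, then, $\omega$ being non-constant $P$-a.s., we would have $S_T>S_0$ $P$-a.s., whence $\bbbe_P[S_T]>S_0$, contradicting the inequality just obtained. Therefore $P(\{\omega:\omega\text{ non-constant and }S_T(\omega)\le S_0\})>0$, and in particular a non-constant c\`adl\`ag (indeed continuous) $\omega$ with $S_T(\omega)\le S_0$ exists.

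The routine points---convergence of $\sum_m c_m$, the interchange of sum and integral, and the final contradiction---are immediate. The step needing the most care is the supermartingale claim: one must check that the general (not necessarily discrete) stopping times $\tau_n$ cause no trouble, which is where optional stopping for the continuous martingale $\omega$, combined with the positivity-forces-supermartingale argument, does the work; the boundedness of the $h_n$ and the path-by-path finiteness of the number of trades ensure that $h$ is a genuine bounded predictable integrand. A secondary technical point is the $P$-measurability of $S_T$, handled by the universal completion built into $\FFF_T$. The argument goes through verbatim for \emph{any} non-constant positive martingale law carried by $\Omega$, so the choice of geometric Brownian motion is only for concreteness.
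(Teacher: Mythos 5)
Your proof is correct, but it takes a genuinely different route from the paper's. The paper's argument (attributed to Dawid) is constructive and entirely elementary: start from the identically-$1$ path, pick a time $t\in(0,T)$ distinct from every stopping time evaluated on that path, and insert a single jump adverse to the aggregate position $\sum_m h^m_{n(m)}(1)$ held there; since positivity of each $\K^{G_m}$ against c\`adl\`ag paths forces every position to be positive (an upward jump is unbounded, a downward jump can reach $0$), the adverse move is a drop to $0$, the sum of positions converges in $[0,\infty]$, and the resulting piecewise-constant, non-constant, positive c\`adl\`ag path satisfies $S_T(\omega)\le S_0$. Your argument instead invokes a non-constant positive martingale law and the fact that every positive capital process is a $P$-supermartingale --- but that is precisely the content of Proposition~\ref{prop:coherence}, which the paper states and proves separately in the same appendix (with the optional-sampling details spelled out in Appendix~\ref{app:D}); so you are in effect deriving the lemma as a corollary of the later, heavier result. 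Each approach buys something: yours is shorter modulo Proposition~\ref{prop:coherence}, generalizes to any non-constant positive martingale law, and produces a \emph{continuous} counterexample path (showing the market can even choose among continuous functions, a point the paper makes separately via Proposition~\ref{prop:bounded-variation}); the paper's is self-contained, avoids stochastic calculus and measurability issues entirely, exhibits an explicit path, and can be iterated to give the market a large supply of piecewise-constant paths on which the trader cannot gain. One small point of care in your version: the passage from ``each partial sum of (\ref{eq:simple-capital}) is a martingale'' to ``$\K^{G_m}$ is a local martingale'' needs a localizing sequence (e.g.\ $\tau_{N+1}$), since the pointwise limit of martingales need not be a martingale; your Fatou/positivity step then correctly yields the supermartingale property.
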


  \begin{proof}
    Consider any representation of $S$ in the form (\ref{eq:positive-capital}).
    Let $(\tau^m_n)$ and $(h^m_n)$ be the stopping times and functions
    involved in the definition of $G_m$.
    The set of all stopping times $\tau_n^m$ is countable.
    Choose any $t\in(0,T)$ that is different from all $\tau_n^m(1)$,
    where $1$ stands for the element of $\Omega$ that is identically equal to $1$.
    For each $m$, let $n(m)$ be the largest integer such that
    $\tau^m_{n(m)}<t$
    (with $\tau^m_0$ understood to be $0$).
    Now we can define $\omega$ by the requirements
    that it should be equal to $1$ in the interval $[0,t)$,
    be constant in the interval $[t,T]$, and satisfy
    \[
      \omega(t)-\omega(t-)
      =
      \begin{cases}
        1 & \text{if $\sum_m h_{n(m)}(1)<0$}\\
        -1 & \text{if $\sum_m h_{n(m)}(1)\ge0$}.
      \end{cases}
    \]
    (In fact, only the second case is possible as all $h_n$ are positive;
    in particular, there are no problems of convergence.)
  \end{proof}

  \noindent
  The step of Lemma~\ref{lem:coherence} can be repeated more than once,
  which allows the market to choose from among a lot of piecewise constant functions
  $\omega$
  without allowing the trader to increase his capital.

  However, Proposition~\ref{prop:bounded-variation} even allows the market
  to choose from among many continuous functions.
\blueend\fi

It turns out that the definitions based on the cylinder sets~\eqref{eq:cylinder} lead to a theory
that is in sharp contrast with our intuition about financial markets:
for example, in the case of the sample space $\Omega:=C^+[0,T]$,
the upper probability of a set $E\subseteq\Omega$ can take only two possible values:
it is 1 if $E$ contains a constant function and it is 0 otherwise.
See \cite{\CTXI} for a detailed discussion.
However, the following question remains open:

\begin{question}
  Let the sample space be the set $\Omega:=D^+[0,T]$ of all positive c\`adl\`ag functions
  $\omega:[0,T]\to[0,\infty)$.
  Let $E$ be the set of all $\omega\in\Omega$ satisfying $\vi(\omega)=2$.
  Is it true that $\UpProb(E)=1$ (or at least $\UpProb(E)>0$)?
\end{question}

At this point it is natural to show that we do not have similar problems
for the definitions of Sections \ref{sec:cadlag} and \ref{sec:continuous}
and Appendix \ref{app:B}.
For $t\in[0,T]$,
let $X_t:\Omega\to\bbbr$ be the projection $X_t(\omega):=\omega(t)$;
we will use this definition for
$\Omega:=D^+[0,T]$, $\Omega:=C^+[0,T]$, and $\Omega:=R^+[0,T]$.

\begin{proposition}\label{prop:coherence}
  Let $X_t$ be a martingale w.r.\ to a probability measure $P$
  on $(\Omega,\FFF_T)$
  and the filtration $(\FFF_t)$,
  where $\Omega$ is one of the spaces
  $\Omega:=D^+[0,T]$, $\Omega:=C^+[0,T]$, or $\Omega:=R^+[0,T]$.
  If $E\in\FFF_T$ satisfies $P(E)=1$,
  then $\UpProb(E)=1$.
\end{proposition}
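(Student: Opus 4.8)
The inequality $\UpProb(E)\le1$ is immediate: the constant process $S\equiv1$ (the simple capital process with initial capital $1$ and all positions $h_n:=0$) is a positive capital process with $S_0=1$ and $S_T\ge\III_E$ everywhere. So the whole content is the reverse inequality $\UpProb(E)\ge1$, and for this it suffices to show that every positive capital process $S$ satisfies $\Expect_P[S_T]\le S_0$. Granting this, any $S$ with $S_T\ge\III_E$ gives $S_0\ge\Expect_P[S_T]\ge\Expect_P[\III_E]=P(E)=1$, and taking the infimum over such $S$ yields $\UpProb(E)\ge1$.

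To prove $\Expect_P[S_T]\le S_0$, I would first reduce to a single simple capital process. Writing $S_t=\sum_m\K^{G_m}_t$ with each $\K^{G_m}$ positive and $\sum_mc_m=S_0$, Tonelli's theorem (all summands nonnegative) gives $\Expect_P[S_T]=\sum_m\Expect_P[\K^{G_m}_T]$, so it is enough to prove $\Expect_P[\K^G_T]\le c$ for one positive simple capital process $\K^G$ of initial capital $c$. The plan here is to show that each increment has zero mean,
\[
  \Expect_P\bigl[h_n\bigl(X_{\tau_{n+1}\wedge T}-X_{\tau_n\wedge T}\bigr)\bigr]=0,
\]
and then to pass to the limit. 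For fixed $N$, the process stopped at $\tau_{N+1}$ coincides with the capital process of $G$ with positions truncated to $0$ after time $\tau_N$, so it is again a positive simple capital process, and by the displayed identity applied to $n=1,\dots,N$ it has terminal expectation exactly $c$. Since for each $\omega$ only finitely many $\tau_n(\omega)$ are finite, $\K^G_{T\wedge\tau_{N+1}}\to\K^G_T$ pointwise as $N\to\infty$; as these are all nonnegative, Fatou's lemma gives $\Expect_P[\K^G_T]\le\liminf_N\Expect_P[\K^G_{T\wedge\tau_{N+1}}]=c$, the required supermartingale-type inequality.

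The one genuinely delicate point --- the step I expect to be the main obstacle --- is the vanishing of the increment means, i.e.\ the optional-sampling identity $\Expect_P[X_\tau\mid\FFF_\sigma]=X_\sigma$ for stopping times $\sigma:=\tau_n\le\tau:=\tau_{n+1}$ bounded by $T$. The usual continuous-time optional sampling theorem assumes the filtration is right-continuous, which we have deliberately avoided. I would circumvent this by discretizing from above: put $\sigma^{(k)}:=(2^{-k}\lceil2^k\sigma\rceil)\wedge T$ and likewise $\tau^{(k)}$. One checks directly that these are stopping times for the \emph{original} filtration (since $\{\sigma^{(k)}\le j2^{-k}\}=\{\sigma\le j2^{-k}\}\in\FFF_{j2^{-k}}$), that $\sigma^{(k)}\le\tau^{(k)}$, and that $\FFF_\sigma\subseteq\FFF_{\sigma^{(k)}}$. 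As they take only finitely many values, the elementary discrete-time optional sampling theorem applied to $X$ on the dyadic grid yields $X_{\sigma^{(k)}}=\Expect_P[X_T\mid\FFF_{\sigma^{(k)}}]$ and $\Expect_P[X_{\tau^{(k)}}\mid\FFF_{\sigma^{(k)}}]=X_{\sigma^{(k)}}$, whence $\Expect_P[h_n(X_{\tau^{(k)}}-X_{\sigma^{(k)}})]=0$ for the bounded $\FFF_{\sigma^{(k)}}$-measurable $h_n$. Then I would let $k\to\infty$: because every $\omega\in\Omega$ is right-continuous and $\sigma^{(k)}\downarrow\sigma$, $\tau^{(k)}\downarrow\tau$, the paths give $X_{\sigma^{(k)}}\to X_\sigma$ and $X_{\tau^{(k)}}\to X_\tau$ a.s.; and the representations $X_{\sigma^{(k)}}=\Expect_P[X_T\mid\FFF_{\sigma^{(k)}}]$ exhibit $\{X_{\sigma^{(k)}}\}_k$ and $\{X_{\tau^{(k)}}\}_k$ as conditional expectations of the single integrable variable $X_T$, hence as uniformly integrable families. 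Uniform integrability together with boundedness of $h_n$ upgrades the a.s.\ convergence to $L^1$ convergence, so expectations pass to the limit and the increment mean vanishes.

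This completes the argument. The three candidate sample spaces enter only through right-continuity of paths (used to pass from the discretized stopping times to their limits), a property shared by $D^+[0,T]$, $C^+[0,T]$, and $R^+[0,T]$; integrability of $X_T$ is automatic from the martingale hypothesis, and nonnegativity of the capital processes (needed for Tonelli and Fatou) is built into the definition of a positive capital process.
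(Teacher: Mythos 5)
Your proposal is correct, and its skeleton is the same as the paper's: both reduce the claim to showing that every positive capital process $S$ satisfies $\Expect_P[S_T]\le S_0$ (the paper phrases this as ``positive capital processes become positive supermartingales under $P$''), by integrating (\ref{eq:positive-capital}) term by term, passing from the partial sums of (\ref{eq:simple-capital}) to the full sum by a Fatou-type argument, and resting everything on the optional-sampling identity for a single increment $h_n\bigl(X_{\tau_{n+1}\wedge T}-X_{\tau_n\wedge T}\bigr)$. The one place you genuinely diverge is in how that identity is justified: the paper notes that $P$-almost all paths are c\`adl\`ag (Dellacherie--Meyer, Theorem VI.3) and then invokes the standard optional sampling theorem for right-continuous martingales, spelling out the conditioning details in Appendix~\ref{app:D}, whereas you reprove optional sampling from scratch by approximating the stopping times from above by dyadic-valued ones and using uniform integrability of the family $\Expect_P[X_T\mid\FFF_{\sigma^{(k)}}]$. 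Your route is self-contained, avoids any appeal to the ``usual conditions'' or to a c\`adl\`ag modification, and uses only the right-continuity of every individual $\omega$, which holds by definition in all three sample spaces; the price is redoing a textbook proof. One small point to tidy: $\tau_n$ may equal $\infty$, so ``stopping times bounded by $T$'' should read $\sigma:=\tau_n\wedge T$, $\tau:=\tau_{n+1}\wedge T$, and $h_n$ should be replaced by $h_n\III_{\{\tau_n\le T\}}$, which is bounded, $\FFF_{\tau_n\wedge T}$-measurable, and multiplies the same increment since the increment vanishes on $\{\tau_n>T\}$ --- this is exactly the case split the paper performs in Appendix~\ref{app:D}.
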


\begin{proof}
  \ifFULL\bluebegin
    The proof of this proposition follows the proof of Lemma~5
    in \cite{\CTIV} (but the latter is more detailed).
  \blueend\fi
  Under $P$,
  any positive simple capital process becomes a positive local martingale,
  since by the optional sampling theorem,
  every partial sum in (\ref{eq:simple-capital}) becomes a martingale.
  Every positive local martingale is a supermartingale%
  \ifFULL\bluebegin
    \ (\cite{Revuz/Yor:1999}, p.~123, under their definition of a supermartingale)%
  \blueend\fi,
  and so, by the monotone convergence theorem,
  any positive capital process (\ref{eq:positive-capital})
  is a positive supermartingale
  (not necessarily right-continuous).
  Therefore,
  the existence of a positive capital process $S$
  increasing its value between times $0$ and $T$
  by more than a strictly positive constant for all $\omega\in\Omega$
  would contradict $\int S_T\dd P\le\int S_0\dd P$.
  \ifFULL\bluebegin
    In previous versions,
    I used the maximal inequality for positive c\`adl\`ag supermartingales
    (see, e.g., \cite{Revuz/Yor:1999}, Chapter II, Exercise (1.15) on p.~58),
    as applied to the partial sums.
  \blueend\fi
\end{proof}

\noindent
Proposition~\ref{prop:coherence} shows
that the results of Sections \ref{sec:cadlag} and \ref{sec:continuous}
and Appendix \ref{app:B} 
have many implications for typical paths of numerous stochastic processes,
including Brownian motion,
which is continuous and has typical paths $\omega$ satisfying $\vi(\omega)=2$.
\ifFULL\bluebegin
  The existence of Brownian motion is usually proved for the filtration $(\FFF^{\circ}_t)$,
  but it is clear that the same stochastic process
  will remain Brownian motion for the filtration $\FFF_t$ as well.
\blueend\fi

\section{Details of the proof of Proposition \ref{prop:coherence}}
\label{app:D}

The proof of Proposition~\ref{prop:coherence} relies on the fact
that each addend in (\ref{eq:simple-capital})
(and, therefore, each partial sum in (\ref{eq:simple-capital}))
is a martingale when $\omega$ is a martingale.
In this appendix we will check carefully this property.
The argument is obvious,
but it might be useful to spell it out.
\ifFULL\bluebegin
  (For example, to make sure that it does not depend on the ``usual conditions''.)
\blueend\fi

We know that, even when $\Omega=R^+[0,T]$,
almost all $\omega\in\Omega$ are c\`adl\`ag
(\cite{Dellacherie/Meyer:1982}, Theorem VI.3),
which allows us to apply the optional sampling theorem
(see, e.g., \cite{Revuz/Yor:1999}, Theorem II.3.2).

Each addend in (\ref{eq:simple-capital}) can be rewritten as
\begin{equation*}
  h_n(\omega)
  \bigl(
    \omega(\tau_{n+1}\wedge t)-\omega(\tau_n\wedge t)
  \bigr)
  =
  h_n(\omega)
  \bigl(
    \omega(t)-\omega(\tau_n\wedge t)
  \bigr)
  -
  h_n(\omega)
  \bigl(
    \omega(t)-\omega(\tau_{n+1}\wedge t)
  \bigr),
\end{equation*}
and so it suffices to prove that
\begin{equation}\label{eq:addend}
  h'_n(\omega)
  \bigl(
    \omega(t)-\omega(\tau_n\wedge t)
  \bigr),
\end{equation}
where $h'_n$ is bounded and $\FFF_{\tau_n}$-measurable,
is a martingale.
For each $t\in[0,\infty)$,
(\ref{eq:addend}) is integrable
by the boundedness of $h'_n$ and the optional sampling theorem.
We only need to prove, for $0<s<t$, that
(omitting, until the end of the proof,
the prime in $h'$, the argument $\omega$, and ``a.s.'')
\begin{equation*} 
  \Expect
  \left(
    h_n
    \bigl(
      \omega(t)-\omega(\tau_n\wedge t)
    \bigr)
    \mid
    \FFF_s
  \right)
  =
  h_n
  \bigl(
    \omega(s)-\omega(\tau_n\wedge s)
  \bigr).
\end{equation*}
We will check this equality on two $\FFF_s$-measurable events separately:
\begin{description}
\item[$\{\tau_n\le s\}$:]
  We need to check
  \begin{equation*}
    \Expect
    \left(
      h_n
      \bigl(
        \omega(t)-\omega(\tau_n)
      \bigr)
      \III_{\{\tau_n\le s\}}
      \mid
      \FFF_s
    \right)
    =
    h_n
    \bigl(
      \omega(s)-\omega(\tau_n)
    \bigr)
    \III_{\{\tau_n\le s\}}.
  \end{equation*}
  Since $h_n\III_{\{\tau_n\le s\}}$ is bounded and $\FFF_s$-measurable
  (its $\FFF_s$-measurability follows, e.g.,
  from Lemma 1.2.15 in \cite{Karatzas/Shreve:1991} and the monotone-class theorem),
  it suffices to check
  \begin{equation*}
    \Expect
    \left(
      \bigl(
        \omega(t)-\omega(\tau_n)
      \bigr)
      \III_{\{\tau_n\le s\}}
      \mid
      \FFF_s
    \right)\\
    =
    \bigl(
      \omega(s)-\omega(\tau_n)
    \bigr)
    \III_{\{\tau_n\le s\}}.
  \end{equation*}
  Since $\omega(\tau_n)\III_{\{\tau_n\le s\}}$ is $\FFF_s$-measurable,
  it suffices to check
  \begin{equation*}
    \Expect
    \left(
      \omega(t)
      \III_{\{\tau_n\le s\}}
      \mid
      \FFF_s
    \right)
    =
    \omega(s)
    \III_{\{\tau_n\le s\}}.
  \end{equation*}
  The stronger equality
  $
    \Expect
    \left(
      \omega(t)
      \mid
      \FFF_s
    \right)
    =
    \omega(s)
  $
  is part of the definition of a martingale.
\item[$\{s<\tau_n\}$:]
  We are required to prove
  \begin{equation*}
    \Expect
    \left(
      h_n
      \bigl(
        \omega(t)-\omega(\tau_n\wedge t)
      \bigr)
      \III_{\{s<\tau_n\}}
      \mid
      \FFF_s
    \right)
    =
    0,
  \end{equation*}
  but we will prove more:
  \begin{equation*}
    \Expect
    \left(
      h_n
      \bigl(
        \omega(t)-\omega(\tau_n\wedge t)
      \bigr)
      \III_{\{s<\tau_n\}}
      \mid
      \FFF_{s\vee\tau_n\wedge t}
    \right)
    =
    0
  \end{equation*}
  ($s\vee x\wedge t$ being a shorthand
  for $(s\vee x)\wedge t$ or, equivalently, $s\vee(x\wedge t)$).
  Since the event $\{\tau_n\le t\}$,
  being equal to $\{\tau_n\le s\vee\tau_n\wedge t\}$,
  is $\FFF_{s\vee\tau_n\wedge t}$-measurable
  (see \cite{Karatzas/Shreve:1991}, Lemma~1.2.16),
  it is sufficient to prove
  \begin{equation}\label{eq:remains}
    \Expect
    \left(
      h_n
      \bigl(
        \omega(t)-\omega(\tau_n\wedge t)
      \bigr)
      \III_{\{s<\tau_n\le t\}}
      \mid
      \FFF_{s\vee\tau_n\wedge t}
    \right)
    =
    0
  \end{equation}
  and
  \begin{equation*}
    \Expect
    \left(
      h_n
      \bigl(
        \omega(t)-\omega(\tau_n\wedge t)
      \bigr)
      \III_{\{t<\tau_n\}}
      \mid
      \FFF_{s\vee\tau_n\wedge t}
    \right)
    =
    0.
  \end{equation*}
  The second equality is obvious,
  so our task has reduced to proving the first, (\ref{eq:remains}).
  Since $h_n\III_{\{\tau_n\le t\}}=h_n\III_{\{\tau_n\le s\vee\tau_n\wedge t\}}$
  is bounded and $\FFF_{s\vee\tau_n\wedge t}$-measurable,
  (\ref{eq:remains}) reduces to
  \begin{equation*}
    \Expect
    \left(
      \bigl(
        \omega(t)-\omega(\tau_n\wedge t)
      \bigr)
      \III_{\{s<\tau_n\le t\}}
      \mid
      \FFF_{s\vee\tau_n\wedge t}
    \right)
    =
    0,
  \end{equation*}
  which is the same thing as
  \begin{equation*}
    \Expect
    \left(
      \bigl(
        \omega(t)-\omega(s\vee\tau_n\wedge t)
      \bigr)
      \III_{\{s<\tau_n\le t\}}
      \mid
      \FFF_{s\vee\tau_n\wedge t}
    \right)
    =
    0.
  \end{equation*}
  The last equality follows from the $\FFF_{s\vee\tau_n\wedge t}$-measurability
  of the event
  \[
    \{s<\tau_n\le t\}
    =
    \{s<s\vee\tau_n\wedge t\}
    \cap
    \{\tau_n\le s\vee\tau_n\wedge t\}
  \]
  (see \cite{Karatzas/Shreve:1991}, Lemma 1.2.16)
  and the special case
  \begin{multline*}
    \Expect
    \left(
      \bigl(
        \omega(t)-\omega(s\vee\tau_n\wedge t)
      \bigr)
      \mid
      \FFF_{s\vee\tau_n\wedge t}
    \right)\\
    =
    \omega(s\vee\tau_n\wedge t) - \omega(s\vee\tau_n\wedge t)
    =
    0
  \end{multline*}
  of the optional sampling theorem.
\end{description}

\ifFULL\bluebegin
  \section{Connections with standard probability theory}

  Lepingle's \cite{Lepingle:1976} results
  seem to contradict the fact that in the discontinuous case
  the price process is not a time-changed Brownian motion
  (see Proposition~\ref{prop:bounded-variation} above).
  The explanation might be that Kolmogorov's axioms
  make the market complete:
  not only price process is a local martingale,
  but any process (such as the process of jumps of the price process) has a compensator,
  which provides a plethora of new local martingales
  that can be used in hedging.

  \section{Non-zero interest rates}

  See \cite{Shafer/Vovk:2001}, Section~12.4,
  but with a bank account instead of a risk-free bond;
  the value of the account with initial investment $1$
  and no further investments or withdrawals
  can be assumed continuous.

  Now $\omega(t)$ is interpreted as the discounted security price,
  discounted by the value of the bank account.
\blueend\fi


\begin{thebibliography}{10}

\bibitem{Billingsley:1968}
Patrick Billingsley.
\newblock {\em Convergence of Probability Measures}.
\newblock Wiley, New York, 1968.

\bibitem{Bruneau:1979}
Michel Bruneau.
\newblock Sur la $p$-variation des surmartingales.
\newblock {\em S\'eminaire de probabilit\'es de Strasbourg}, 13:227--232, 1979.

\bibitem{Cover:1991}
Thomas Cover.
\newblock Universal portfolios.
\newblock {\em Mathematical Finance}, 1:1--29, 1991.

\bibitem{Dellacherie/Meyer:1978}
Claude Dellacherie and Paul-Andr\'e Meyer.
\newblock {\em Probabilities and Potential}.
\newblock North-Holland, Amsterdam, 1978.
\newblock Chapters I--IV. French original: 1975; reprinted in 2008.

\bibitem{Dellacherie/Meyer:1982}
Claude Dellacherie and Paul-Andr\'e Meyer.
\newblock {\em Probabilities and Potential B: Theory of Martingales}.
\newblock North-Holland, Amsterdam, 1982.
\newblock Chapters V--VIII. French original: 1980; reprinted in 2008.

\bibitem{Dudley/Norvaisa:2011}
Richard~M. Dudley and Rimas Norvai{\u s}a.
\newblock {\em Concrete Functional Calculus}.
\newblock Springer, Berlin, 2011.

\bibitem{Karatzas/Shreve:1991}
Ioannis Karatzas and Steven~E. Shreve.
\newblock {\em Brownian Motion and Stochastic Calculus}.
\newblock Springer, New York, second edition, 1991.

\bibitem{Kolmogorov:1929LLN}
Andrei~N. Kolmogorov.
\newblock Sur la loi des grands nombres.
\newblock {\em Atti della Reale Accademia Nazionale dei Lincei. Classe di
  scienze fisiche, matematiche, e naturali. Rendiconti Serie~VI}, 185:917--919,
  1929.

\bibitem{Koolen/deRooij:2010local}
Wouter Koolen and Steven de~Rooij.
\newblock Switching investments.
\newblock {\em Proceedings of the Twenty First International Conference on
  Algorithmic Learning Theory, Lecture Notes in Artificial Intelligence},
  6331:239--254, 2010.

\bibitem{Lepingle:1976}
Dominique Lepingle.
\newblock La variation d'ordre $p$ des semi-martingales.
\newblock {\em Zeit\-schrift f\"ur Wahrscheinlichkeitstheorie und Verwandte
  Gebiete}, 36:295--316, 1976.

\bibitem{Leung/Cover:1978}
Sik~K. Leung-Yan-Cheong and Thomas~M. Cover.
\newblock Some equivalences between {S}hannon entropy and {K}olmogorov
  complexity.
\newblock {\em IEEE Transactions on Information Theory}, IT-24:331--338, 1978.

\bibitem{Lyons:1998}
Terry~J. Lyons.
\newblock Differential equations driven by rough signals.
\newblock {\em Revista Mathem\'atica Iberoamericana}, 14:215--310, 1998.

\bibitem{Norvaisa:2000-full}
Rimas Norvai{\u s}a.
\newblock Modelling of stock price changes: a real analysis approach.
\newblock {\em Finance and Stochastics}, 3:343--369, 2000.
\newblock Also published as {\tt arXiv.org} technical report, {\tt
  arXiv:0005238} [math.PR].

\bibitem{Norvaisa:2006}
Rimas Norvai{\u s}a.
\newblock Rough functions: $p$-variation, calculus, and index estimation.
\newblock {\em Lithuanian Mathematical Journal}, 46:102--128, 2006.

\bibitem{Revuz/Yor:1999}
Daniel Revuz and Marc Yor.
\newblock {\em Continuous Martingales and Brownian Motion}.
\newblock Springer, Berlin, third edition, 1999.

\bibitem{Rogers:1997local}
L.~C.~G. Rogers.
\newblock Arbitrage with fractional {B}rownian motion.
\newblock {\em Mathematical Finance}, 7:95--105, 1997.

\bibitem{Salopek:1998}
Donna~M. Salopek.
\newblock Tolerance to arbitrage.
\newblock {\em Stochastic Processes and their Applications}, 76:217--230, 1998.

\bibitem{Shafer/Vovk:2001}
Glenn Shafer and Vladimir Vovk.
\newblock {\em Probability and Finance: It's Only a Game!}
\newblock Wiley, New York, 2001.

\bibitem{Shiryaev:1999}
Albert~N. Shiryaev.
\newblock {\em Essentials of Stochastics in Finance: Facts, Models, Theory}.
\newblock World Scientific, Singapore, 1999.

\bibitem{Stricker:1979}
Christophe Stricker.
\newblock Sur la $p$-variation des surmartingales.
\newblock {\em S\'eminaire de probabilit\'es de Strasbourg}, 13:233--237, 1979.

\bibitem{Takeuchi/etal:2009}
Kei Takeuchi, Masayuki Kumon, and Akimichi Takemura.
\newblock A new formulation of asset trading games in continuous time with
  essential forcing of variation exponent.
\newblock {\em Bernoulli}, 15:1243--1258, 2009.

\bibitem{Taylor:1972}
S.~James Taylor.
\newblock Exact asymptotic estimates of {B}rownian path variation.
\newblock {\em Duke Mathematical Journal}, 39:219--241, 1972.

\bibitem{GTP25arXiv}
Vladimir Vovk.
\newblock Continuous-time trading and the emergence of volatility. {T}he
  {G}ame-{T}he\-o\-ret\-ic {P}robability and {F}inance project, {W}orking
  {P}aper 25,
  \texttt{http://prob\linebreak[0]a\linebreak[0]bil\linebreak[0]i\linebreak[0]ty\linebreak[0]and\linebreak[0]fi\linebreak[0]nance.com},
  \texttt{http://arxiv.\linebreak[0]org/\linebreak[0]abs/\linebreak[0]0712.1483},
  December 2007.
\newblock Journal version: \emph{Electronic Communications in Probability},
  13:319--324, 2008.

\bibitem{Vovk:2011-LMJ-avoid}
Vladimir Vovk.
\newblock Rough paths in idealized financial markets.
\newblock {\em Lithuanian Mathematical Journal}, 51:274--285, 2011.

\bibitem{GTP28arXiv}
Vladimir Vovk\zzrelax{28}.
\newblock Continuous-time trading and the emergence of probability. {T}he
  {G}ame-{T}he\-o\-ret\-ic {P}robability and {F}inance project, {W}orking
  {P}aper 28,
  \texttt{http://\linebreak[0]prob\linebreak[0]a\linebreak[0]bil\linebreak[0]i\linebreak[0]ty\linebreak[0]and\linebreak[0]fi\linebreak[0]nance.com},
  {\tt ar{x}iv:0904.4364}, May 2015 (first posted in April 2009).
\newblock Journal version: \emph{Finance and Stochastics}, 16:561--609, 2012.

\bibitem{GTP43arXiv}
Vladimir Vovk\zzrelax{43}.
\newblock Getting rich quick with the {A}xiom of {C}hoice. {T}he
  {G}ame-{T}he\-o\-ret\-ic {P}robability and {F}inance project, {W}orking
  {P}aper 43,
  \texttt{http://\linebreak[0]prob\linebreak[0]a\linebreak[0]bil\linebreak[0]i\linebreak[0]ty\linebreak[0]and\linebreak[0]fi\linebreak[0]nance.com},
  {\tt ar{x}iv:1604.00596}, November 2016 (first posted in April 2016).

\end{thebibliography}
\end{document}